\begin{document}

\newtheorem{theorem}{Theorem}[section]
\newtheorem{lemma}[theorem]{Lemma}
\newtheorem{fact}[theorem]{Fact}
\newtheorem{corollary}[theorem]{Corollary}
\newtheorem{definition}{Definition}[section]
\newtheorem{proposition}[theorem]{Proposition}
\newtheorem{observation}[theorem]{Observation}
\newtheorem{claim}[theorem]{Claim}
\newtheorem{assumption}[theorem]{Assumption}
\newtheorem{notation}[theorem]{Notation}
\newenvironment{proof}{\noindent{\bf Proof:\/}}{\hfill $\Box$\vskip 0.1in}
\newenvironment{proofsp}{\noindent{\bf Proof}}{\hfill $\Box$\vskip 0.1in}
\pagenumbering{arabic}

\newcommand {\ignore} [1] {}

\date{}

\pagenumbering{arabic}

\title{Approximating subset $k$-connectivity problems}

\author{Zeev Nutov \\
\small The Open University of Israel \\ 
\small {\tt nutov@openu.ac.il}}

\maketitle

\begin{abstract}
A subset $T \subseteq V$ of terminals is $k$-connected to a root $s$ in a 
directed/undirected graph $J$ if $J$ has $k$ internally-disjoint $vs$-paths 
for every $v \in T$; $T$ is $k$-connected in $J$ if $T$ is $k$-connected to every $s \in T$.
We consider the {\sf Subset $k$-Connectivity Augmentation} problem:
given a graph $G=(V,E)$ with edge/node-costs, node subset $T \subseteq V$, 
and a subgraph $J=(V,E_J)$ of $G$ such that $T$ is $k$-connected in $J$,
find a minimum-cost augmenting edge-set $F \subseteq E \setminus E_J$ such that 
$T$ is $(k+1)$-connected in $J \cup F$.
The problem admits trivial ratio $O(|T|^2)$.
We consider the case $|T|>k$ and prove that for directed/undirected graphs and edge/node-costs,
a $\rho$-approximation for {\sf Rooted Subset $k$-Connectivity Augmentation} 
implies the following ratios for {\sf Subset $k$-Connectivity Augmentation}: \\
(i)  $b(\rho+k) + {\left(\frac{3|T|}{|T|-k}\right)}^2 H\left(\frac{3|T|}{|T|-k}\right)$; \\ 
(ii) $\rho \cdot O\left(\frac{|T|}{|T|-k} \log k \right)$, \\
where $b=1$ for undirected graphs and $b=2$ for directed graphs, 
and $H(k)$ is the $k$th harmonic number.
The best known values of $\rho$ on undirected graphs are 
$\min\{|T|,O(k)\}$ for edge-costs and 
$\min\{|T|,O(k \log |T|)\}$ for node-costs; 
for directed graphs $\rho=|T|$ for both versions.
Our results imply that unless $k=|T|-o(|T|)$, {\sf Subset $k$-Connectivity Augmentation} admits
the same ratios as the best known ones for the rooted version.
This improves the ratios in \cite{N-focs,L}.
\end{abstract}


\section{Introduction} \label{s:intro}

In the {\sf Survivable Network} problem we are given a graph $G=(V,E)$ with edge/node-costs 
and pairwise connectivity requirements $\{r(u,v):u,v \in T \subseteq V\}$
on a set $T$ of terminals.
The goal is to find a minimum-cost subgraph of $G$ that contains $r(u,v)$ internally-disjoint $uv$-paths
for all $u,v \in T$. 
In {\sf Rooted Subset $k$-Connectivity} problem there is $s \in T$ such that $r(s,t)=k$ for all 
$t \in T \setminus \{s\}$ and $r(u,v)=0$ otherwise.
In {\sf Subset $k$-Connectivity} problem $r(u,v)=k$ for all 
$u,v \in T$ and $r(u,v)=0$ otherwise.
In the {\em augmentation versions}, $G$ contains a subgraph $J$ of cost zero 
with $r(u,v)-1$ internally disjoint paths for all $u,v \in T$.
A subset $T \subseteq V$ of terminals is $k$-connected to a root $s$ in a 
directed/undirected graph $J$ if $J$ has $k$ internally-disjoint $vs$-paths 
for every $v \in T$; $T$ is $k$-connected in $J$ if $T$ is $k$-connected to every $s \in T$.
Formally, the versions of {\sf Survivable Network} we consider are as follows, 
where we revise our notation to $k \gets k+1$. 

\vspace{0.2cm}

\noindent
{\sf Rooted Subset $k$-Connectivity Augmentation} \\
{\em Instance:} \ 
A graph $G=(V,E)$ with edge/node-costs, a set $T \subseteq V$ of terminals, 
root $s \in T$, and a \hphantom{\em Instance: } subgraph $J=(V,E_J)$ of $G$ such that $T \setminus \{s\}$ 
 is $k$-connected to $s$ in $J$. \\
{\em Objective:}
Find a minimum-cost augmenting edge-set $F \subseteq E \setminus E_J$ such that 
$T \setminus \{s\}$ is $(k+1)$-connected \hphantom{\em Objective:} to $s$ in $J \cup F$.

\vspace{0.2cm}

\noindent
{\sf Subset $k$-Connectivity Augmentation} \\
{\em Instance:} \ 
A graph $G=(V,E)$ with edge/node-costs, subset $T \subseteq V$, and a subgraph  
 $J=(V,E_J)$ of \hphantom{\em Instance: } $G$ such that $T$ is $k$-connected in $J$. \\
{\em Objective:}
Find a minimum-cost augmenting edge-set $F \subseteq E \setminus E_J$ such that 
$T$ is $(k+1)$-connected \hphantom{\em Objective:} in $J \cup F$.

\vspace*{0.2cm}

The {\sf Subset $k$-Connectivity Augmentation} is {\sf Label-Cover} hard to approximate \cite{KKL}.
It is known and easy to see that for both edge-costs and node-costs,
if {\sf Subset $k$-Connectivity Augmentation} admits approximation ratio $\rho(k)$
such that $\rho(k)$ is a monotone increasing function,
then {\sf Subset $k$-Connectivity} admits ratio $k \cdot \rho(k)$.
Moreover, for edge costs, if in addition the approximation 
$\rho(k)$ is w.r.t. a standard setpair/biset LP-relaxation to the problem, 
then {\sf Subset $k$-Connectivity} admits ratio $H(k) \cdot \rho(k)$, where $H(k)$
denotes the $k$th harmonic number.
For edge-costs, a standard LP-relaxation for {\sf Survivable Network} (due to Frank and Jord\'{a}n \cite{FJ}) is:
$$\min\left\{\sum_{e \in E} c_e x_e : \sum_{e \in E(X,X^*)}x_e \geq r(X,X^*), X,X^* \subseteq V,
X \cap X^*=\emptyset, 0 \leq x_e \leq 1 \right\}$$
where $r(X,X^*)=\max\{r(u,v):u \in X, v \in X^*\}$ and $E(X,X^*)$ is the set of edges in $E$ from $X$ to $X^*$.

The {\sf Subset $k$-Connectivity} problem admits trivial ratios 
$O(|T|^2)$ for both edge-costs and node-costs, 
by computing for every $u,v \in V$ an optimal edge-set of 
$k$ internally-disjoint $uv$-paths (this is essentially a {\sf Min-Cost $k$-Flow} problem, 
that can be solved in polynomial time), and taking the union of the computed edge-sets.
We note that for metric edge-costs the problem admits an $O(1)$ ratio \cite{CV}.
For $|T| \geq k+1$ the problem can also be decomposed into $k$ instances of 
{\sf Rooted Subset $k$-Connectivity} problems,
c.f. \cite{KN2} for the case $T=V$, where it is also shown that for $T=V$ the number of  
of {\sf Rooted Subset $k$-Connectivity Augmentation} instances 
can be reduced to $O\left(\frac{|T|}{|T|-k} \log k\right)$,
which is $O(\log k)$ unless $k=|T|-o(|T|)$.

Recently, Laekhanukit \cite{L} made an important observation that the method of \cite{KN2}
can be extended for the case of arbitrary $T \subseteq V$.
Specifically, he proved that if $|T| \geq 2k$, then $O(\log k)$ instances 
of {\sf Rooted Subset $k$-Connectivity Augmentation} will suffice.
Thus for $|T| \geq 2k$, the $O(k)$-approximation algorithm of \cite{N-focs} 
for {\sf Rooted Subset $k$-Connectivity Augmentation}
leads to the ratio $O(k\log k)$ for {\sf Rooted Subset $k$-Connectivity Augmentation}.
By cleverly exploiting an additional property of the algorithm of \cite{N-focs} (see \cite[Lemma 14]{L}),
he reduced the ratio to $O(k)$ in the case $|T| \geq k^2$.

However, using a different approach, we will show that all this is not necessary, 
as for both directed and undirected graphs and edge-costs and node-costs, 
{\sf Subset $k$-Connectivity Augmentation} can be reduced to solving {\em one} 
instance (or two instances, in the case of directed graphs) of {\sf Rooted Subset $k$-Connectivity Augmentation}
and $O{\left(\frac{3|T|}{|T|-k}\right)}^2 H\left(\frac{3|T|}{|T|-k}\right)$ instances of 
{\sf Min-Cost $k$-Flow} problem. This leads to a much simpler algorithm, 
improves the result of Laekhanukit \cite{L} for $|T|<k^2$, and applies also for node-costs 
and directed graphs.
In addition, we give a more natural and much simpler extension of the algorithm of \cite{KN2} for $T=V$,
that also enables the same bound $O\left(\frac{|T|}{|T|-k} \log k\right)$ as in \cite{KN2} for arbitrary 
$T$ with $|T| \geq k+1$, and in addition applies also for directed graphs, for node-costs, 
and for an arbitrary type of edge-costs, e.g., metric costs, or uniform costs, or $0,1$-costs.
When we say ``$0,1$-edge-costs'' we mean that the input graph $G$ is complete, and the goal is 
to add to the subgraph $J$ of $G$ formed by the zero-cost edges a minimum size edge-set $F$ 
(any edge is allowed) such that $J \cup F$ satisfies the connectivity requirements.
Formally, our result is the following.

\begin{theorem} \label{t:SR}
For both directed and undirected graphs, and edge-costs and node-costs the following holds.
If {\sf Rooted Subset $k$-Connectivity Augmentation} admits approximation ratio 
$\rho=\rho(k,|T|)$, then for $|T| \geq k+1$
{\sf Subset $k$-Connectivity Augmentation} admits the following approximation ratios:
\begin{itemize}
\item[{\em (i)}]
$b(\rho+k)+{\left(\frac{|T|}{|T|-k}\right)}^2 O\left(\log \frac{|T|}{|T|-k}\right)$,
where $b=1$ for undirected graphs and $b=2$ for directed graphs.
\item[{\em (ii)}]
$\rho \cdot O\left(\frac{|T|}{|T|-k}  \log \min\{k,|T|-k\}\right)$, and this is so also for $0,1$-edge-costs.
\end{itemize}
Furthermore, if for edge-costs the approximation ratio $\rho$ is w.r.t. a standard
LP-relaxation for the problem, then so are the ratios in {\em (i)} and {\em (ii)}.
\end{theorem}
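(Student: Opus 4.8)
\noindent{\bf Proof plan.} The guiding principle behind both parts is the following consequence of Menger's theorem. Fix a terminal $s\in T$ and a subgraph $H\supseteq J$ in which $T\setminus\{s\}$ is $(k+1)$-connected to $s$ (and, for directed graphs, $s$ is $(k+1)$-connected to $T\setminus\{s\}$). If $u,v\in T\setminus\{s\}$ are {\em not} $(k+1)$-connected in $H$ then, since $T$ is already $k$-connected in $H\supseteq J$, every minimum $uv$-separator has size exactly $k$ and must contain $s$: otherwise $s$ lies in the component of $u$ or of $v$ after deleting the separator, so the separator also separates the other of $u,v$ from $s$, contradicting the rooted connectivity. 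Equivalently, every biset still deficient for the subset-$(k+1)$ requirement has $s$ in its boundary. Throughout, I use that $X\mapsto d_H(X)+|\Gamma_H(X)|$ is submodular over bisets, so the family of deficient bisets is closed under biset intersection and union; restricting to inclusion-minimal members yields a cross-free (essentially laminar) family of polynomial size, on which a greedy covering argument can be run.

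{\em Part (i).} Pick any root $s\in T$ and run the $\rho$-approximation for {\sf Rooted Subset $k$-Connectivity Augmentation} with starting subgraph $J$ (feasible, since $T$ is $k$-connected in $J$); for directed graphs, also run it on the reverse instance. As an optimal subset solution is feasible for each rooted instance, this costs at most $b\rho\cdot\mathrm{opt}$ and produces $H$ as above. It then remains to $(k+1)$-connect every pair in $T\setminus\{s\}$ that is separated by a size-$k$ set through $s$. I would first show that $O(k)$ additional min-cost $k$-flow computations --- each of cost at most $\mathrm{opt}$, since an optimal solution carries $k+1$ internally disjoint $uv$-paths for every $u,v\in T$ --- reduce the residual deficient family to a canonical form in which the two ``sides'' of each member behave like terminal bipartitions cut by a common small set; this accounts for the additive $bk\cdot\mathrm{opt}$. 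The key structural claim is then that this canonical family is ``hit'' by the terminal pairs arising from a partition of $T$ into $O(|T|/(|T|-k))$ groups, because the boundary of a deficient biset contains at most $k$ terminals while the remaining $\ge|T|-k$ terminals split into its two sides. Running greedy set cover over the polynomially many minimal residual deficient bisets, with a min-cost $k$-flow between a pair of terminals from two of these groups as a covering set, then needs $O\!\left((|T|/(|T|-k))^2\,H(|T|/(|T|-k))\right)$ flow computations, each of cost $\le\mathrm{opt}$ and each only adding paths, hence never creating new deficiencies. Summing the three contributions yields ratio (i).

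{\em Part (ii).} Proceed in rounds. In round $i$, choose $R_i\subseteq T$ with $|R_i|=\Theta(|T|/(|T|-k))$ and solve {\sf Rooted Subset $k$-Connectivity Augmentation} for each root in $R_i$ (plus reverse instances, for directed graphs), paying $O(|T|/(|T|-k))\cdot\rho\cdot\mathrm{opt}$ for the round. By the principle above, a pair survives round $i$ only if every vertex of $R_i$ lies in a common size-$k$ separator of that pair in the current graph; such a separator misses at least $|T|-k$ of the $|T|$ terminals, so a random $R_i$ of this size escapes it with constant probability, and using the lattice structure of minimum separators of a $k$-connected graph one upgrades this to: each round eliminates a constant fraction of the surviving minimal deficient bisets. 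Their number is polynomially bounded and, more sharply, the governing potential is bounded in terms of $\min\{k,|T|-k\}$ (the boundary of a deficient biset holds $\le k$ terminals, and its small side can shrink at most $|T|-k$ times before exhausting the non-boundary terminals), so after $O(\log\min\{k,|T|-k\})$ rounds $T$ is $(k+1)$-connected. This gives the claimed number of rooted instances and ratio (ii); since the rooted solver is used only as a black box and the rest is combinatorial, the same bound holds for $0,1$-edge-costs.

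{\em LP-relaxation and main obstacle.} In both parts the only approximate step is the rooted solver; the min-cost $k$-flow computations are exact and integral. Because each residual instance has requirements dominated by the original subset-$(k+1)$ requirements, the optimum of the standard setpair/biset relaxation of a residual instance is at most that of the original, so if $\rho$ is with respect to the standard relaxation, so are (i) and (ii). The technical heart --- and the anticipated main obstacle --- is the structural analysis in part (i): that a single rooting leaves a residual deficient family governed by $O((|T|/(|T|-k))^2)$ ``types'', coverable by that many (times a harmonic factor) min-cost $k$-flows, together with the auxiliary $O(k)$-flow reduction producing the additive $bk$; part (ii) needs the analogous per-round constant-fraction elimination lemma. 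Both reduce to a careful uncrossing of tight bisets through the fixed root $s$.
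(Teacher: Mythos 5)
Your high-level framework is the right one and matches the paper's Proposition~\ref{p:connection}: reduce each terminal--terminal ``edge'' to a {\sf Min-Cost $k$-Flow} computation of cost at most ${\sf opt}$ (or $\tau^*$), reduce each ``star'' to one rooted call of cost at most $\rho\cdot{\sf opt}$, and observe via Menger that after a rooted call every residual deficient biset has the root in its boundary. But the combinatorial heart of the theorem is exactly what you leave as assertions. In part (i), the paper does not root at a terminal $s\in T$; it uses the Khuller--Raghavachari construction (an external root joined by zero-cost edges to $k+1$ terminals $T'$), precisely because this yields the quantitative bound that drives everything afterwards: every residual tight biset has $X\cap T'\neq\emptyset$ and $X^*\cap T'\neq\emptyset$, hence (Lemma~\ref{l:F*}) the number of pairwise inner-disjoint ``small'' deficient bisets is at most $k+1$. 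Your single-terminal rooting only gives ``$s$ lies in every deficient boundary'', which bounds nothing of this kind: there can still be arbitrarily many pairwise inner-disjoint tight bisets all having $s$ on their boundary. Without that bound, your ``$O(k)$ flows reduce the family to a canonical form'' and your ``partition of $T$ into $O(|T|/(|T|-k))$ groups hits the family'' have no proof route; the paper's route is Theorem~\ref{t:edges}, whose Phase~1 (saturating single-edge reductions of $\nu({\cal F}^k)$, at most $k+1$ of them, giving the additive $bk$) and Phase~2 (a fractional transversal of the cores of value $\nu({\cal F}^k_I)+\frac{2|T|}{|T|-k}$, the bound $\Delta({\cal C}({\cal F}_I))\le\nu(\bar{\cal F}_I)$ from crossingness, Lov\'asz's greedy bound, and then $|T'|\cdot\nu(\bar{\cal F}_I)$ edges) are exactly the missing content behind your $\bigl(\frac{|T|}{|T|-k}\bigr)^2 H\bigl(\frac{|T|}{|T|-k}\bigr)$ count.

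Part (ii) has the analogous gap. Your per-round claim --- that $\Theta(|T|/(|T|-k))$ random roots eliminate a constant fraction of the surviving minimal deficient bisets, and that $O(\log\min\{k,|T|-k\})$ rounds suffice --- is not established and is not even the right potential: the family of inclusion-minimal deficient bisets changes adaptively, and a union bound over its polynomially many members would only give $O(\log n)$ rounds. The paper instead tracks the deterministic potential $\nu({\cal F}^k_I)$ and proves a one-star decrement lemma (Lemma~\ref{l:cover}): some center out-covers at least $\nu(1-\frac{k}{|T|})-1$ cores, and a star into those cores halves their contribution, giving the recursion $\nu\mapsto\frac12(1+\frac{k}{|T|})\nu+\frac12$ whose unravelling (Lemma~\ref{l:computations}) yields $O\bigl(\frac{|T|}{|T|-k}\log\min\{\nu,|T|-k\}\bigr)$ stars; the $\min\{k,\cdot\}$ again comes from first forcing $\nu\le k+1$ via the external-root step. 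So the proposal is a correct reduction skeleton, but the two covering theorems (\ref{t:edges} and \ref{t:stars}) that constitute the actual proof are missing, and the specific rooting you chose does not supply the hypothesis they need.
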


For $|T|>k$, the best known values of $\rho$ on undirected graphs are 
$O(k)$ for edge-costs and $\min\{O(k \log |T|),|T|\}$ for node-costs \cite{N-focs}; 
for directed graphs $\rho=|T|$ for both versions.
For $0,1$-edge-costs 
$\rho=O(\log k)$   \cite{N-aug} for undirected graphs and
$\rho=O(\log |T|)$ \cite{N-raug} for directed graphs.
For edge-costs, these ratios are w.r.t. a standard LP-relaxation.
Thus Theorem~\ref{t:SR} implies the following.

\begin{corollary} \label{c:main}
For $|T| \geq k+1$, {\sf Subset $k$-Connectivity Augmentation} admits the following approximation ratios.
\begin{itemize}
\item
For undirected graphs, the ratios are
$O(k)+{\left(\frac{|T|}{|T|-k}\right)}^2 O\left(\log \frac{|T|}{|T|-k}\right)$ for edge-costs,  
$O(k \log |T|)+{\left(\frac{|T|}{|T|-k}\right)}^2 O\left(\log \frac{|T|}{|T|-k}\right)$ for node-costs,
and $\frac{|T|}{|T|-k} \cdot O\left(\log^2k\right)$ for $0,1$-edge-costs.
\item
For directed graphs, the ratio is 
$2(|T|+k)+{\left(\frac{|T|}{|T|-k}\right)}^2 O\left(\log \frac{|T|}{|T|-k}\right)$
for both edge-costs and node-costs, and $\frac{|T|}{|T|-k} \cdot O\left(\log |T| \log k \right)$
for $0,1$ edge-costs.
\end{itemize}
For {\sf Subset $k$-Connecivity}, the ratios are larger by a factor of $H(k)$ for edge-costs,
and by a factor $k$ for node-costs.
\end{corollary}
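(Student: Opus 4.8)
\medskip\noindent\textbf{Proof idea.}
I will work throughout in the biset (setpair) formulation of node-connectivity: by Menger's theorem, $T$ is $\ell$-connected in a graph $H$ iff $H$ has no \emph{tight deficient biset} --- a pair $\hat X=(X,X^+)$ with $X\subseteq X^+\subseteq V$, $|X^+\setminus X|\le \ell-1$, $X\cap T\ne\emptyset$, $(V\setminus X^+)\cap T\ne\emptyset$, and no edge of $H$ from $X$ to $V\setminus X^+$. This framework is cost-oblivious, so it treats edge-costs and node-costs and directed and undirected graphs uniformly (node-costs via the standard node-splitting gadget, directed graphs by keeping bisets oriented); I will suppress these distinctions except where the factor $b$ enters. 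Fix an optimal augmenting set $F^*$ with $c(F^*)=\mathrm{OPT}$.

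\emph{Step 1 (rooting lemma).} The structural core is: if $s\in T$ and $F_0$ is any edge set such that $T\setminus\{s\}$ is $(k+1)$-connected to $s$ in $J\cup F_0$ (for directed graphs, also \emph{from} $s$ --- hence two rooted instances, $b=2$; for undirected one instance suffices, $b=1$, by complement symmetry of bisets), then every tight deficient biset of $J\cup F_0$ witnessing a failure of $(k+1)$-connectivity of $T$ has $s$ in its boundary $X^+\setminus X$. I will prove this by a short case analysis on the position of $s$ relative to such a biset $\hat X$: if $s\notin X^+$ the biset is covered by the $(k+1)$-connectivity to $s$ of the terminal in $X$; if $s\in X$ the complement biset $(V\setminus X^+,V\setminus X)$ (same boundary, still edge-free for undirected; for directed use the reverse rooted instance) is covered; hence $s\in X^+\setminus X$. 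Since $T$ is $k$-connected in $J$, we have $T\setminus\{s\}$ is $(k-1)$-connected in $J-s$, and moving $s$ out of the boundary shows the remaining bad bisets are exactly the tight deficient bisets of a \emph{valid} {\sf Subset $(k-1)$-Connectivity Augmentation} instance on terminal set $T\setminus\{s\}$ in $(J\cup F_0)-s$.

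\emph{Step 2 (part (i)).} Run the $\rho$-approximation for {\sf Rooted Subset $k$-Connectivity Augmentation} once (undirected) or twice (directed); since $F^*$ is feasible for each rooted instance this costs at most $b\rho\cdot\mathrm{OPT}$, plus an additive $bk\cdot\mathrm{OPT}$ to reduce the rooted instance(s) to standard form (adding a cheap, $k$-flow-realized, $k$-connectivity certificate). By Step 1 it remains to handle the residual low-connectivity instance on $T\setminus\{s\}$ in $(J\cup F_0)-s$, which I will cover by {\sf Min-Cost $k$-Flow} computations: for terminals $u,v$ lying on the two sides of a tight deficient biset, an optimal $(k{+}1)$-flow between $u$ and $v$ (with the current edges free) has cost $\le\mathrm{OPT}$ and destroys \emph{all} residual bad bisets separating $u$ from $v$. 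The point --- and \textbf{the step I expect to be the main obstacle} --- is the combinatorial bound that $O\!\left(\left(\frac{|T|}{|T|-k}\right)^2\log\frac{|T|}{|T|-k}\right)$ such flows suffice: a deficient boundary has size $\le k<|T|$ and so leaves $\ge|T|-k$ terminals free, whence partitioning $T$ into $\Theta\!\left(\frac{|T|}{|T|-k}\right)$ groups makes every such biset "split" by some ordered pair of groups, and for a fixed pair of groups a greedy set-cover over the inclusion-minimal uncovered bad bisets (whose essential family is small after biset uncrossing) closes in $O\!\left(\log\frac{|T|}{|T|-k}\right)$ flows, with $O\!\left(\left(\frac{|T|}{|T|-k}\right)^2\right)$ group-pairs in total. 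The uncrossing argument bounding the minimal bad bisets, and checking that the greedy analysis charges correctly against $\mathrm{OPT}$, is the delicate part; everything else is bookkeeping.

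\emph{Step 3 (part (ii), and the LP claim).} Here I instead decompose {\sf Subset $k$-Connectivity Augmentation} directly into {\sf Rooted Subset $k$-Connectivity Augmentation} instances, extending the $T=V$ construction of \cite{KN2}: pick a random multiset $R\subseteq T$ of size $O\!\left(\frac{|T|}{|T|-k}\log\min\{k,|T|-k\}\right)$ and run the rooted $\rho$-approximation from every $r\in R$. Feasibility follows from Step 1 applied to each root together with a union bound --- replacing each tight deficient biset by its side with fewer terminals makes the relevant depth $\min\{k,|T|-k\}$ and the number of inclusion-minimal bisets polynomial, while a single random root hits the free terminals of a fixed side with probability $\Omega\!\left(\frac{|T|-k}{|T|}\right)$, so $R$ hits all of them with high probability. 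Each rooted instance costs $\le\rho\cdot\mathrm{OPT}$ and the reduction keeps costs of type $0,1$, giving the stated ratio. Finally, every bound above uses only (a) feasibility of $F^*$ for the rooted and flow subproblems --- which also lower-bounds their optima by the restriction of a standard biset-LP solution --- and (b) exactness of {\sf Min-Cost $k$-Flow}, whose polytope is integral; hence all ratios hold verbatim with $\mathrm{OPT}$ replaced by the value of the standard LP-relaxation.
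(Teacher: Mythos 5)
Your proposal does not follow the paper's route to Corollary~\ref{c:main}: in the paper the corollary is a one-line substitution of the best known values of $\rho$ (namely $O(k)$ for undirected edge-costs, $\min\{O(k\log|T|),|T|\}$ for node-costs, $|T|$ for directed graphs, and $O(\log k)$, $O(\log|T|)$ for $0,1$-edge-costs) into Theorem~\ref{t:SR}, plus the standard factor-$k$ (or $H(k)$, via the LP) reduction from {\sf Subset $k$-Connectivity} to its augmentation version. You instead try to re-derive the reduction of Theorem~\ref{t:SR} itself, and the two quantitative claims your argument rests on are precisely the hard content of the paper and are left unproved. The assertion in your Step~2 that ${\left(\frac{|T|}{|T|-k}\right)}^2 O\left(\log \frac{|T|}{|T|-k}\right)$ min-cost $k$-flow computations suffice is Theorem~\ref{t:edges}, which the paper proves via crossing and $k$-regular biset families, cores, $\Delta({\cal C}({\cal F})) \leq \nu(\bar{\cal F})$, and a Lov\'{a}sz-greedy transversal (Lemmas \ref{l:Delta}--\ref{l:bounds}); moreover that bound carries the additive terms $\nu\left({\cal F}^k\right)+\nu\left(\bar{\cal F}^k\right)$, which are brought down to $k+1$ only by the paper's rooting step (a new zero-cost root joined to $k+1$ terminals $T'$, after which every residual tight biset meets $T'$ in both its inner and complementary parts). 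Your rooting at an actual terminal $s$ only forces $s$ into the boundary of every residual deficient biset; the residual family can still contain $\Omega(|T|)$ pairwise-disjoint small cores, so the flow-count you assert does not follow, and the step you flag as ``the main obstacle'' is a genuinely missing proof rather than bookkeeping.

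Step~3 has a flaw beyond incompleteness. A full rooted augmentation from a terminal $r$ covers every deficient biset except those with $r$ in their boundary, and which bisets remain tight depends on the edges already added by your algorithm; a union bound over a family of bisets fixed in advance is therefore circular, and even for the initial family there are exponentially many boundary traces on $T$ to avoid, so a non-adaptive random multiset of $O\left(\frac{|T|}{|T|-k}\log\min\{k,|T|-k\}\right)$ roots cannot be justified this way (non-adaptively one can only guarantee feasibility with $k+1$ roots, giving back the trivial $(k+1)\rho$ bound). The paper's Theorem~\ref{t:stars} avoids this through an adaptive greedy choice of star centers governed by the potential $\nu\left({\cal F}^k_I\right)$, together with Lemmas \ref{l:cover} and \ref{l:computations}, and the stars are then converted to rooted instances by Proposition~\ref{p:connection}. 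Finally, your write-up never substitutes the concrete values of $\rho$ and does not address the last sentence of the corollary (the extra $H(k)$ or $k$ factor for {\sf Subset $k$-Connectivity}); these parts are routine, but as it stands the proposal proves neither the reduction nor the corollary.
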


Note that except the case of $0,1$-edge-costs,
Corollary~\ref{c:main} is deduced from part~(i) of Theorem~\ref{t:SR}.
However, part~(ii) of Theorem~\ref{t:SR} might become relevant if 
{\sf Rooted Subset $k$-Connectivity Augmentation} admits ratio 
better than $O(k)$. In addition, part~(ii) applies for {\em any type} of edge-costs, 
e.g. metric or $0,1$-edge-costs.

We conclude this section by mentioning some additional related work.
The case $T=V$ of {\sf Rooted Subset $k$-Connectivity} problem is the 
{\sf $k$-Outconnected Subgraph} problem; this problem admits a polynomial time algorithm for
directed graphs \cite{FT}, which implies ratio $2$ for undirected graphs.
For arbitrary $T$, the problem harder than {\sf Directed Steiner Tree} \cite{LN}.
The case $T=V$ of {\sf Subset $k$-Connectivity} problem is the 
{\sf $k$-Connected Subgraph} problem. This problem is NP-hard, 
and the best known ratio for it is $O\left(\log k \log \frac{n}{n-k}\right)$ for both
directed and undirected graphs \cite{N-DA9}; for the augmentation version of increasing the 
connectivity by one the ratio in \cite{N-DA9} is $O\left(\log \frac{n}{n-k}\right)$.
For metric costs the problem admits ratios $2+\frac{k-1}{n}$ for undirected graphs and $2+\frac{k}{n}$ 
for directed graphs \cite{KN1}.
For $0,1$-edge-costs the problem is solvable for directed graphs \cite{FJ}, which implies ratio $2$
for undirected graphs.
The {\sf Survivable Network} problem is {\sf Label-Cover} hard \cite{KKL}, and the currently best known 
non-trivial ratios for it on undirected graphs are: 
$O(k^3 \log |T|)$ for arbitrary edge-costs by Chuzhoy and Khanna \cite{CK-new}, 
$O(\log k)$ for metric costs due to Cheriyan and Vetta \cite{CV}, 
$O(k) \cdot \min\left\{\log^2 k, \log |T|\right\}$ for $0,1$-edge-costs \cite{N-aug,KN-aug},
and $O(k^4 \log^2 |T|)$ for node-costs \cite{N-focs}.

\section{Proof of Theorem~\ref{t:SR}} \label{s:SR}

We start by proving the following essentially known statement.

\begin{proposition} \label{p:connection}
Suppose that {\sf Rooted Subset $k$-Connectivity Augmentation} admits an approximation ratio $\rho$.
If for an instance of {\sf Subset $k$-Connectivity Augmentation} we are given a set of
$q$ edges (when any edge is allowed) and $p$ stars (directed to or from the root) on $T$ 
whose addition to $G$ makes $T$ $(k+1)$-connected,
then we can compute a $(\rho p+q)$-approximate solution $F$ to this instance in polynomial time. 
Furthermore, for edge-costs, if the $\rho$-approximation is w.r.t. 
a standard LP-relaxation, then $c(F) \leq (\rho p+q) \tau^*$, where $\tau^*$ is an optimal 
standard LP-relaxation value for {\sf Subset $k$-Connectivity Augmentation}.
\end{proposition}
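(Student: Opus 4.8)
The plan is to reduce the single {\sf Subset $k$-Connectivity Augmentation} instance to $p$ {\sf Rooted Subset $k$-Connectivity Augmentation} instances, one per star, and then add the $q$ free edges directly. Suppose the given stars are $S_1, \ldots, S_p$, where $S_i$ is centered at a terminal $s_i \in T$ and is directed either to or from $s_i$, and let $Q$ be the given set of $q$ edges, so that $T$ is $(k+1)$-connected in $G + S_1 + \cdots + S_p + Q$. The key observation is that, since $T$ is already $k$-connected in $J$, raising the connectivity from $s_i$ to $T \setminus \{s_i\}$ by one unit is exactly a {\sf Rooted Subset $k$-Connectivity Augmentation} instance on the same ground graph $G$ (with the same costs and the same subgraph $J$, after adding the already-chosen edges as zero-cost edges). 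So for each $i$ I would set up such a rooted instance with root $s_i$; the star $S_i$, restricted to the edges of $G$, is a feasible solution to it, hence the optimum of the $i$-th rooted instance is at most $c(S_i \cap E)$, and running the $\rho$-approximation yields an edge-set $F_i \subseteq E$ with $c(F_i) \le \rho \cdot c(S_i \cap E)$ that makes $T \setminus \{s_i\}$ be $(k+1)$-connected to $s_i$. I would process the stars sequentially, folding $F_1, \ldots, F_{i-1}$ and the free edges $Q$ into the current subgraph (as cost-zero edges) before solving instance $i$, so that feasibility of $S_i$ as a solution is preserved and the union $F = Q \cup F_1 \cup \cdots \cup F_p$ raises the connectivity from \emph{every} $s_i$ to $(k+1)$.

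The correctness argument is then: after adding $F$, for each $i$ the set $T \setminus \{s_i\}$ is $(k+1)$-connected to $s_i$ in $J \cup F$; and since the stars and $Q$ together certify $(k+1)$-connectivity of $T$, and every star is centered at some terminal, every terminal $s \in T$ that is a star center has $T \setminus \{s\}$ $(k+1)$-connected to it. One must check this suffices for $T$ being $(k+1)$-connected in $J \cup F$ — i.e. that it is enough to be $(k+1)$-out/in-connected from the star centers. This is where the hypothesis that the stars together with $q$ edges make $T$ $(k+1)$-connected is used: the decomposition into stars rooted at the $s_i$'s (plus a bounded number of loose edges) is precisely the structural guarantee that covering the rooted demands at the $s_i$'s, plus the $q$ edges, covers all of $T$'s pairwise demands. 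For the cost bound, $c(F) \le q_{\max} + \sum_{i=1}^p \rho \cdot c(S_i \cap E) \le q + \rho p \cdot \mathrm{opt}$, since each $c(S_i \cap E) \le \mathrm{opt}$ (the optimal augmenting set for the whole {\sf Subset $k$-Connectivity Augmentation} instance is itself feasible for each rooted subinstance) and each of the $q$ free edges costs at most $\mathrm{opt}$ as well; more carefully, $c(Q) \le q\cdot\mathrm{opt}$ and $\sum_i c(S_i\cap E)\le p\cdot\mathrm{opt}$, giving $c(F)\le(\rho p + q)\mathrm{opt}$.

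For the LP statement, I would replace "$\le \mathrm{opt}$" by "$\le \tau^*$" throughout, where $\tau^*$ is the optimal value of the standard setpair/biset LP-relaxation for the {\sf Subset $k$-Connectivity Augmentation} instance. The point is that the LP-relaxation for each rooted subinstance is a relaxation of (a restriction of) the LP for the full instance: any fractional solution feasible for the full biset LP, restricted to the rooted demands at $s_i$ and with the already-added edges set to $0$, is feasible for the rooted LP, so the rooted LP optimum is at most $\tau^*$; if the $\rho$-approximation is w.r.t. that rooted LP then $c(F_i) \le \rho \tau^*$. Summing and adding $c(Q) \le q \tau^*$ gives $c(F) \le (\rho p + q)\tau^*$.

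I expect the main obstacle to be the structural/correctness step: verifying that having $T \setminus \{s_i\}$ $(k+1)$-connected to $s_i$ for all star centers $s_i$, together with the $q$ extra edges, actually yields $(k+1)$-connectivity of $T$ in $J \cup F$ — that is, pinning down exactly what "a set of $q$ edges and $p$ stars whose addition makes $T$ $(k+1)$-connected" buys us, and why solving the rooted problems on the same ground graph is enough rather than needing the specific star edges. The cost accounting and the LP argument are routine once the reduction is set up correctly; the care needed is in the sequential folding-in of previously chosen edges so that each $S_i$ remains a valid feasible solution to the $i$-th rooted instance.
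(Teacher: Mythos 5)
There is a genuine gap, and it lies mainly in your treatment of the $q$ edges. The qualifier ``when any edge is allowed'' in the proposition means that the $q$ edges and the $p$ stars are arbitrary edges on $T$: they come from the extremal statements (Theorems~\ref{t:edges} and \ref{t:stars}) and need not belong to $E$ at all, nor carry any cost bound. Hence you cannot put $Q$ into the output $F$ (the solution must be a subset of $E \setminus E_J$), and your claims $c(Q) \leq q\cdot{\sf opt}$ and $c(Q) \leq q\tau^*$ have no justification. For the same reason, ``the star $S_i$ restricted to the edges of $G$ is a feasible solution to the rooted instance'' need not hold, though you only use it for an intermediate bound you do not need (your parenthetical bound, that the rooted optimum is at most ${\sf opt}$ because the optimal subset augmentation is feasible for each rooted instance, is the correct one). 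The paper instead handles each certificate edge $uv$ with $u,v \in T$ by computing, via a {\sf Min-Cost $k$-Flow} computation, a minimum-cost $F_{uv} \subseteq E \setminus E_J$ giving the required number of internally disjoint $uv$-paths in $J \cup F_{uv}$; since the optimal augmentation makes $T$ $(k+1)$-connected and $u,v$ are terminals, $c(F_{uv}) \leq {\sf opt}$, and $c(F_{uv}) \leq \tau^*$ for edge-costs. This exact replacement is where the additive $q$ in $\rho p + q$ comes from. Each star with center $s$ and leaf set $T'$ is likewise replaced by a $\rho$-approximate rooted augmentation; this part of your plan is essentially right.

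The second problem is that the step you yourself flag as ``the main obstacle'' is left as an assertion, and in the form you state it (rooted $(k{+}1)$-connectivity at the star centers plus the edges of $Q$ globally implies subset $(k{+}1)$-connectivity) it is not the argument that closes the proof. The argument is local, via Menger: $F$ is feasible iff it covers every $(T,k)$-tight biset $\hat{X}$, whose boundary has exactly $k$ nodes and which, by the hypothesis on the certificate, is covered by some edge $uv$ of $Q$ or of a star, with $u \in X$, $v \in X^*$. If $J \cup F_{uv}$ (respectively, $J \cup F_S$ for the star containing $uv$) has $k+1$ internally disjoint $uv$-paths, then one of these paths avoids the $k$ boundary nodes, hence goes from $X$ to $X^*$ using an edge not in $E_J$, i.e., the replacement edge-set covers every tight biset that the certificate edge covered. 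Applying this to each of the $q$ edges and the $p$ stars shows that $T$ is $(k+1)$-connected in $J \cup F$. Without this replacement lemma, and with $Q$ removed from your solution (as it must be), your construction does not yield a feasible $F$, so the gap is substantive rather than cosmetic.
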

\begin{proof}
For every edge $uv$ among the $q$ edges compute a minimum-cost edge-set $F_{uv} \subseteq E \setminus E_J$ 
such that $J \cup F_{uv}$ contains $k$ internally-disjoint $uv$-paths.
This can be done in polynomial time for both edge and node costs, using a 
{\sf Min-Cost $k$-Flow} algorithm. For edge-costs, it is known that $c(F_{uv}) \leq \tau^*$.
Then replace $uv$ by $F_{uv}$, and note that $T$ remains $k$-connected.
Similarly, for every star $S$ with center $s$ and leaf-set $T'$, compute an $\alpha$-approximate 
augmenting edge-set $F_S \subseteq E \setminus E_J$ such that 
$J \cup F_S$ contains $k$ internally-disjoint 
$sv$-paths (or $vs$-paths, in the case of directed graphs and $S$ being directed twords the root) 
for every $v \in T'$.
Then replace $S$ by $F_S$, and note that $T$ remains $k$-connected.
For edge-costs, it is known that if the $\rho$-approximation for the rooted version is w.r.t. 
a standard LP-relaxation, then $c(F_S) \leq (\alpha p+q) \tau^*$.
The statement follows.
\end{proof}

Motivated by Proposition~\ref{p:connection}, we consider the following question: \\
{\em Given a $k$-connected subset $T$ in a graph $J$, 
how many edges and/or stars on $T$ one needs to add to $J$ such that $T$ will become $(k+1)$-connected?} \\

We emphasize that we are interested in obtaining
{\em absolute bounds} on the number of edges in the question, expressed in certain parameters of the graph;
namely we consider the {\em extremal graph theory} question and not the {\em algorithmic problem}.
Indeed, the algorithmic problem of adding the minimum number of edges on $T$ 
such that $T$ will become $(k+1)$-connected can be shown to admit a polynomial-time algorithm
for directed graphs using the result of Frank and Jord\'{a}n \cite{FJ}; 
this also implies a $2$-approximation algorithm for undirected graphs.
However, in terms of the parameters $|T|,k$,
the result in \cite{FJ} implies only the trivial bound $O(|T|^2)$ on the the number of edges one needs to
add to $J$ such that $T$ will become $(k+1)$-connected.

Our bounds will be derived in terms of the family of the ``deficient'' sets of the graph $J$.
We need some definitions to state our results.

\begin{definition}
An ordered pair $\hat{X}=(X,X^+)$ of subsets of a groundset $V$ is
called a {\em biset} if $X \subseteq X^+$; $X$ is the {\em inner part} and 
$X^+$ is the {\em outer part} of $\hat{X}$,
$\Gamma(\hat{X})= X^+ \setminus X$ is the boundary of $\hat{X}$,
and $X^*=V \setminus X^+$ is the {\em complementary set} of $\hat{X}$.
\end{definition}

Given an instance of {\sf Subset $k$-Connectivity Augmentation} we may assume that $T$ is an independent set in $J$.
Otherwise, we obtain an equivalent instance by subdividing every edge $uv \in J$ with $u,v \in T$ by a new node.

\begin{definition}
Given a $k$-connected independent set $T$ in a graph $J=(V,E_J)$ 
let us say that a biset $\hat{X}$ on $V$ is {\em $(T,k)$-tight} in $J$ 
if $X \cap T,X^* \cap T \neq \emptyset$,
$X^+$ is the union of $X$ and the set of neighbors of $X$ in $J$, and $|\Gamma(\hat{X})|=k$.
\end{definition}

An edge covers a biset $\hat{X}$ if it goes from $X$ to $X^*$.
By Menger's Theorem, $F$ is a feasible solution to {\sf Subset $k$-Connectivity Augmentation} if, and only if,
$F$ covers the biset-family ${\cal F}$ of tight bisets; see \cite{KN-sur,N-aug}.
Thus our question can be reformulated as follows: \\
{\em Given a $k$-connected independent set $T$ in a graph $J$, 
how many edges and/or stars on $T$ are needed to cover the family ${\cal F}$ of $(T,k)$-tight bisets?}

\begin{definition} \label{d:main}
The intersection and the union of two bisets $\hat{X},\hat{Y}$ is defined by 
$\hat{X} \cap \hat{Y} = (X \cap Y,X^+ \cap Y^+)$ and 
$\hat{X} \cup  \hat{Y} = (X \cup Y,X^+ \cup Y^+)$. 
Two bisets $\hat{X},\hat{Y}$ {\em intersect} if $X \cap Y \neq \emptyset$;
if in addition $X^* \cap Y^* \neq \emptyset$ then $\hat{X},\hat{Y}$ {\em cross}.
We say that a biset-family ${\cal F}$ is:
\begin{itemize}
\item
{\em crossing} if $\hat{X} \cap \hat{Y},\hat{X} \cup \hat{Y} \in {\cal F}$ 
for any $\hat{X},\hat{Y} \in {\cal F}$ that cross.
\item
{\em $k$-regular} if $|\Gamma(\hat{X})| \leq k$ for every $\hat{X} \in {\cal F}$, and if  
$\hat{X} \cap \hat{Y},\hat{X} \cup \hat{Y} \in {\cal F}$ 
for any intersecting $\hat{X},\hat{Y} \in {\cal F}$
with $|X \cup Y| \leq |T|-k-1$.
\end{itemize}
\end{definition}

The following statement is essentially known. 

\begin{lemma} \label{l:properties}
Let $T$ be a $k$-connected independent set in a graph $J=(V,E_J)$,
and let $\hat{X},\hat{Y}$ be $(T,k)$-tight bisets. 
If $(X \cap T,X^+ \cap T),(Y \cap T,Y^+ \cap T)$ cross or if $|(X \cup Y) \cap T| \leq |T|-k-1$
then $\hat{X} \cap \hat{Y},\hat{X} \cup \hat{Y}$ are both $(T,k)$-tight. 
\end{lemma}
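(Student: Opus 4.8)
The plan is to exploit the standard submodularity/supermodularity relations for bisets of this type and then verify the three defining properties of $(T,k)$-tightness---nonempty trace on both sides, the "outer part = inner part plus neighbors" condition, and $|\Gamma|=k$---for $\hat{X}\cap\hat{Y}$ and $\hat{X}\cup\hat{Y}$. First I would record the elementary facts that for any bisets, $\Gamma(\hat{X}\cap\hat{Y})$ and $\Gamma(\hat{X}\cup\hat{Y})$ together are contained in $\Gamma(\hat{X})\cup\Gamma(\hat{Y})$, and that the inner-part operation behaves like ordinary set intersection/union on the $T$-traces since $T$ is independent in $J$ (so a vertex of $T$ is never in any $\Gamma(\hat{Z})$ for a tight biset $\hat{Z}$; this is exactly why the hypothesis is phrased in terms of the traces $(X\cap T,X^+\cap T)$). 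I would also note the degree-type inequality $|\Gamma(\hat{X}\cap\hat{Y})|+|\Gamma(\hat{X}\cup\hat{Y})|\le|\Gamma(\hat{X})|+|\Gamma(\hat{Y})|$, which holds because $\Gamma(\hat{X})+\Gamma(\hat{Y})$ counts each vertex of $\Gamma(\hat{X}\cap\hat{Y})\cup\Gamma(\hat{X}\cup\hat{Y})$ with multiplicity at least as large.

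Next I would argue the "neighbor-closure" property: if $X^+$ is exactly $X$ together with $N_J(X)$ and similarly for $Y$, then $(X\cap Y)^+ := X^+\cap Y^+$ equals $(X\cap Y)\cup N_J(X\cap Y)$, and the analogous statement for the union. The inclusion of $N_J(X\cap Y)$ into $X^+\cap Y^+$ is immediate; the reverse---that every vertex of $(X^+\cap Y^+)\setminus(X\cap Y)$ is a neighbor of $X\cap Y$---needs a short case check using that such a vertex lies in $\Gamma(\hat X)$ or $X$, and in $\Gamma(\hat Y)$ or $Y$, together with independence of $T$. This is the place where one has to be slightly careful, but it is routine once the bookkeeping is set up, and the union case is symmetric.

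The heart of the argument is showing $|\Gamma|=k$ for both derived bisets, and this is where the two separate hypotheses enter. In the crossing case, feasibility of $\hat{X}\cap\hat{Y}$ and $\hat{X}\cup\hat{Y}$ as demand-carrying bisets (both traces meet $T$ on the inner and complementary side) forces $|\Gamma(\hat{X}\cap\hat{Y})|\ge k$ and $|\Gamma(\hat{X}\cup\hat{Y})|\ge k$ because $T\setminus\{s\}$ is $k$-connected to each $s\in T$ in $J$ and Menger's theorem applied to a terminal in the inner trace and one in the complementary trace gives a cut of size $\ge k$; combined with the submodular inequality $|\Gamma(\hat{X}\cap\hat{Y})|+|\Gamma(\hat{X}\cup\hat{Y})|\le 2k$ this pins both values to exactly $k$. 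In the size-bounded case $|(X\cup Y)\cap T|\le|T|-k-1$, the complementary trace of the union still contains a terminal (since $|T|-|(X\cup Y)\cap T|\ge k+1>0$), so again both traces of $\hat X\cap\hat Y$ and $\hat X\cup\hat Y$ meet $T$ on both sides, giving the same lower bounds $\ge k$, and the submodular inequality again yields equality. I would then assemble these three verified properties to conclude both $\hat{X}\cap\hat{Y}$ and $\hat{X}\cup\hat{Y}$ are $(T,k)$-tight. The main obstacle I anticipate is nailing down the "both traces meet $T$" claim cleanly in the size-bounded case---one must check the inner traces are nonempty (which follows from $X\cap Y\supseteq$ nothing automatically, so here one uses that $\hat X,\hat Y$ intersect, i.e.\ $X\cap Y\neq\emptyset$, but one actually needs $X\cap Y\cap T\neq\emptyset$, which should come from the tightness definition forcing the relevant traces to be nonempty)---and making sure the independence-of-$T$ reduction is invoked exactly where needed so that all the set operations on traces commute with the biset operations.
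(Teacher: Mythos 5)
Your crossing case is essentially right (Menger lower bounds for both derived bisets plus the submodular inequality $|\Gamma(\hat{X}\cap\hat{Y})|+|\Gamma(\hat{X}\cup\hat{Y})|\le|\Gamma(\hat{X})|+|\Gamma(\hat{Y})|=2k$), but the second case has a genuine hole, and it sits exactly where you wave your hands. Your parenthetical claim that a terminal can never lie in $\Gamma(\hat{Z})$ for a tight biset $\hat{Z}$ is false: independence of $T$ only forbids terminal--terminal edges, and a terminal adjacent to a Steiner node of $Z$ lies in $N(Z)\setminus Z$. (If the claim were true, the paper's notion of $k$-regularity of the trace family, with boundary bound $k$, would be pointless.) Consequently your justification that the complementary trace of $\hat{X}\cup\hat{Y}$ meets $T$ ``since $|T|-|(X\cup Y)\cap T|\ge k+1>0$'' does not go through: the $\ge k+1$ terminals outside $X\cup Y$ could a priori all sit in $\Gamma(\hat{X}\cup\hat{Y})\subseteq\Gamma(\hat{X})\cup\Gamma(\hat{Y})$, a set of size up to $2k$. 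The paper's proof of this case is by reference to \cite[Lemma~1.2]{J}, and the whole point of that argument is precisely this step: first get $|\Gamma(\hat{X}\cap\hat{Y})|\ge k$ by Menger (this needs $X\cap Y\cap T\ne\emptyset$, which, as you correctly suspect, is \emph{not} forced by tightness of $\hat{X},\hat{Y}$ but is an implicit hypothesis coming from how the lemma is applied to intersecting members of the trace family), and then either argue by contradiction (if no terminal lies in $X^*\cap Y^*$, then $\Gamma(\hat{X}\cup\hat{Y})$ contains at least $k+1$ terminals, violating submodularity) or, equivalently, deduce $|\Gamma(\hat{X}\cup\hat{Y})|\le 2k-k=k<k+1$ before concluding a terminal survives in $X^*\cap Y^*$. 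Without this your case (ii) is unproved.

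A second, related flaw is the order of your argument for the neighbor-closure condition. You propose to verify $X^+\cap Y^+=(X\cap Y)\cup N_J(X\cap Y)$ by direct ``routine bookkeeping'' before any counting; this is false for general tight $\hat{X},\hat{Y}$: a vertex of $\Gamma(\hat{X})\cap\Gamma(\hat{Y})$ adjacent to $X\setminus Y$ and to $Y\setminus X$ but to nothing in $X\cap Y$ lies in $X^+\cap Y^+$ but not in $(X\cap Y)\cup N_J(X\cap Y)$. The closure condition is a \emph{consequence} of the counting, not a prerequisite: once $|\Gamma(\hat{X}\cap\hat{Y})|=k$ is established, note that $N_J(X\cap Y)\setminus(X\cap Y)$ is a subset of $\Gamma(\hat{X}\cap\hat{Y})$ that itself has size $\ge k$ by Menger, hence the two sets coincide; the union is handled the same way. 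So the skeleton (submodularity plus Menger) is the right one and matches what the cited proofs do, but as written the proposal relies on a false structural claim and omits the contradiction step that carries the size-bounded case.
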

\begin{proof}
The case $(X \cap T,X^+ \cap T),(Y \cap T,Y^+ \cap T)$
was proved in \cite{N-aug} and \cite{L}.
The proof of the case $|(X \cup Y) \cap T| \leq |T|-k-1$
is identical to the proof of \cite[Lemma~1.2]{J} where the case $T=V$ is considered.
\end{proof}

\begin{corollary}
The biset-family 
$${\cal F}=\{(X \cap T,X^+\cap T): (X,X^+) \mbox{ is a } (T,k)\mbox{-tight biset in } J\}$$
is crossing and $k$-regular, and the {\em reverse family} 
$\bar{\cal F}=\{(T \setminus X^+,T \setminus X): \hat{X} \in {\cal F}\}$
of ${\cal F}$ is also crossing and $k$-regular.
Furthermore, if $J$ is undirected then ${\cal F}$ is symmetric, namely,
${\cal F}=\bar{\cal F}$.
\end{corollary}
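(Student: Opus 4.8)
The plan is to deduce this corollary directly from Lemma~\ref{l:properties} together with the definitions of crossing and $k$-regular biset-families, and with the characterization via Menger's theorem already recorded above. Set $\mathcal{F}=\{(X\cap T,X^+\cap T):(X,X^+)\text{ is }(T,k)\text{-tight in }J\}$. The first observation to make is that for a $(T,k)$-tight biset $\hat X$ we have $X\cap T\neq\emptyset$, $X^*\cap T\neq\emptyset$, and $|\Gamma(\hat X)|=k$, and moreover $\Gamma(\hat X)$ separates $X\cap T$ from $X^*\cap T$, so that the trace $(X\cap T,X^+\cap T)$ inherits a boundary of size exactly $k$ inside $T$ (one has to check that $|(X^+\setminus X)\cap T|=k$, which holds because $T$ is $k$-connected, so a minimum $T$-separating cut meeting both sides has size $\ge k$, and it is $\le k$ since $\Gamma(\hat X)$ itself is such a cut of size $k$). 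Hence every member of $\mathcal{F}$ is a biset on the groundset $T$ with boundary of size exactly $k$; in particular $\mathcal{F}$ is $k$-regular in the cardinality sense required.

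Next I would verify the closure properties. Take two bisets in $\mathcal{F}$, say the traces of $(T,k)$-tight $\hat X,\hat Y$. If the traces $(X\cap T,X^+\cap T)$ and $(Y\cap T,Y^+\cap T)$ cross, then Lemma~\ref{l:properties} says $\hat X\cap\hat Y$ and $\hat X\cup\hat Y$ are $(T,k)$-tight, and one checks $(\hat X\cap\hat Y)\cap(T\text{-trace})=(X\cap Y\cap T,(X^+\cap Y^+)\cap T)$ equals the intersection of the two traces computed in the groundset $T$ (here I use that intersection of bisets commutes with restriction to $T$, which is immediate from the componentwise definition in Definition~\ref{d:main}); likewise for the union. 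This gives the crossing property of $\mathcal{F}$. For $k$-regularity we additionally need closure under intersecting (not necessarily crossing) pairs whose union has small inner part: if $|(X\cup Y)\cap T|\le|T|-k-1$, the second clause of Lemma~\ref{l:properties} again guarantees $\hat X\cap\hat Y,\hat X\cup\hat Y$ are $(T,k)$-tight, and the same restriction-commutes-with-intersection/union remark finishes the case. So $\mathcal{F}$ is crossing and $k$-regular.

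For the reverse family $\bar{\mathcal{F}}=\{(T\setminus X^+,T\setminus X):(X,X^+)\in\mathcal{F}\}$ I would argue by a symmetry/duality observation: complementation $(A,A^+)\mapsto(T\setminus A^+,T\setminus A)$ is an involution on bisets over $T$ that swaps inner and complementary parts, preserves boundary size, swaps intersection with union, and swaps the ``crossing'' relation with itself (since $\hat X,\hat Y$ cross iff $X\cap Y\neq\emptyset$ and $X^*\cap Y^*\neq\emptyset$, and both conditions are symmetric under the involution). Under this involution the condition $|X\cup Y|\le|T|-k-1$ on the original pair becomes a condition on the complementary parts of the reversed pair; since the reversed bisets all have boundary size exactly $k$, the inequality $|A^*\cup B^*|\le|T|-k-1$ is equivalent to $|A\cap B|\ge k+1$, hence to $A\cap B\neq\emptyset$ together with a size bound — and I should check this matches exactly the $k$-regularity clause as applied to $\bar{\mathcal F}$. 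The main care-point here is making the reindexing of the size inequality precise; it is a short counting argument using $|\Gamma|=k$ on every member, but it is the place where an off-by-one is most likely, so I would spell it out. Finally, in the undirected case a $(T,k)$-tight biset $\hat X$ has $X^+$ equal to $X$ plus its neighbors, and the same is true for the ``complementary'' biset, so the map $\hat X\mapsto$ its complement carries $(T,k)$-tight bisets to $(T,k)$-tight bisets; passing to traces shows $\mathcal{F}=\bar{\mathcal{F}}$, i.e.\ $\mathcal{F}$ is symmetric. The main obstacle is purely bookkeeping: confirming that the set operations on bisets commute with restriction to $T$ and that the cardinality bound transforms correctly under complementation; everything substantive is already contained in Lemma~\ref{l:properties}.
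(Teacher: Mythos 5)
There is a genuine gap, concentrated in two places. First, your claim that a $(T,k)$-tight biset $\hat X$ has $|(X^+\setminus X)\cap T|=k$ is false: the separator of size $k$ is $\Gamma(\hat X)$ itself, which in general contains non-terminal (Steiner) nodes, and the trace boundary $\Gamma(\hat X)\cap T$ is \emph{not} a $T$-separating set in $J$, so $k$-connectivity of $T$ gives no lower bound on it. All one can say is $|\Gamma(\hat X)\cap T|\leq k$, which is exactly what the definition of $k$-regularity needs, so this slip is harmless for the family ${\cal F}$ itself --- but you then use ``boundary size exactly $k$'' as the engine of your argument for $\bar{\cal F}$, where it is not a cosmetic issue.

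Second, and more seriously, the purely formal complementation argument does not deliver $k$-regularity of $\bar{\cal F}$. Reversal does swap inner and complementary parts and $\cap$ with $\cup$, so the \emph{crossing} property of $\bar{\cal F}$ does follow formally from that of ${\cal F}$ (the crossing relation is self-dual). But the $k$-regularity clause is not self-dual: for $\hat A,\hat B\in\bar{\cal F}$ coming from $\hat X,\hat Y\in{\cal F}$, the hypotheses ``$A\cap B\neq\emptyset$ and $|A\cup B|\leq |T|-k-1$'' translate to ``$X^*\cap Y^*\neq\emptyset$ and $|X^+\cap Y^+|\geq k+1$'' (outer parts, since $T\setminus(A\cup B)=X^+\cap Y^+$), \emph{not} to ``$|A\cap B|\geq k+1$'' as you assert, and not to the hypotheses of Lemma~\ref{l:properties}, which concern the inner parts ($X\cap Y\neq\emptyset$ together with $X^*\cap Y^*\neq\emptyset$, or $|(X\cup Y)\cap T|\leq|T|-k-1$). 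In particular, when $X\cap Y=\emptyset$ and $|(X\cup Y)\cap T|>|T|-k-1$ the lemma gives you nothing, yet the reverse-family hypotheses can hold because $X^+\cap Y^+$ may be large only through overlapping boundaries. The intended route is different: note that $(T\setminus X^+,T\setminus X)$ is precisely the trace of the complementary biset $(X^*,X^*\cup\Gamma(\hat X))$, which is $(T,k)$-tight with respect to the reversed orientation of $J$ (and with respect to $J$ itself when $J$ is undirected --- this is the observation you correctly make at the very end, and it is what yields ${\cal F}=\bar{\cal F}$ in the undirected case). Since $T$ is $k$-connected also in the reverse graph, Lemma~\ref{l:properties} applies directly to that tight family and gives that $\bar{\cal F}$ is crossing and $k$-regular in the directed case as well; your draft never makes this step, and its substitute argument is incorrect as written.
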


Given two bisets $\hat{X},\hat{Y}$ we write $\hat{X} \subseteq \hat{Y}$ and say that 
$\hat{Y}$ contains $\hat{X}$ if $X \subseteq Y$ or if $X=Y$ and $X^+ \subseteq Y^+$;
$\hat{X} \subset \hat{Y}$ and $\hat{Y}$ properly contains $\hat{X}$ if
$X \subset Y$ or if $X=Y$ and $X^+ \subset Y^+$. 

\begin{definition}
A biset $\hat{C}$ is a {\em core} of a biset-family ${\cal F}$ if $\hat{C} \in {\cal F}$
and $\hat{C}$ contains no biset in ${\cal F} \setminus \{\hat{C}\}$;
namely, a core is an inclusion-minimal biset in ${\cal F}$.
Let ${\cal C}({\cal F})$ be the family of cores of ${\cal F}$ and let 
$\nu({\cal F})=|{\cal C}({\cal F})|$ denote their number.
\end{definition}

Given a biset-family ${\cal F}$ and an edge-set $I$ on $T$,
the residual biset-family ${\cal F}_I$ of ${\cal F}$ consists of the members of ${\cal F}$ uncovered by $I$.
We will assume that for any $I$, the cores of ${\cal F}_I$ and of $\bar{\cal F}_I$
can be computed in polynomial time. 
For ${\cal F}$ being the family of $(T,k)$-tight bisets this can be implemented in polynomial time
using the Ford-Fulkerson {\sf Max-Flow Min-Cut} algorithm, c.f. \cite{N-aug}.
It is known and easy to see that if ${\cal F}$ is crossing 
and/or $k$-regular, so is ${\cal F}_I$, for any edge-set $I$. 

\begin{definition}
For a biset-family ${\cal F}$ on $T$ let $\nu({\cal F})$ be the maximum number of bisets in ${\cal F}$ 
which inner parts are pairwise-disjoint.
For an integer $k$ let ${\cal F}^k=\{\hat{X} \in {\cal F}:|X| \leq (|T|-k)/2\}$.
\end{definition}

\begin{lemma} \label{l:F*}
Let ${\cal F}$ be a $k$-regular biset-family on $T$ and let $\hat{X}, \hat{Y} \in {\cal F}^k$ intersect.
Then $\hat{X} \cap \hat{Y} \in {\cal F}^k$ and $\hat{X} \cup \hat{Y} \in {\cal F}$.
\end{lemma}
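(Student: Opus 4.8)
**Proof proposal for Lemma~\ref{l:F*}.**

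The plan is to use the $k$-regularity hypothesis directly, checking that the two intersecting members $\hat X,\hat Y\in{\cal F}^k$ satisfy the size restriction $|X\cup Y|\le |T|-k-1$ that triggers the closure property in the definition of a $k$-regular family. First I would observe that since $\hat X,\hat Y\in{\cal F}^k$ we have $|X|\le (|T|-k)/2$ and $|Y|\le (|T|-k)/2$, hence $|X\cup Y|\le |X|+|Y|\le |T|-k$. To get the strict bound $|X\cup Y|\le |T|-k-1$ needed to invoke $k$-regularity, I would note that the inner parts are subsets of $T$, so all the relevant cardinalities are integers; moreover $\hat X$ and $\hat Y$ intersect, so $X\cap Y\ne\emptyset$, giving $|X\cup Y|=|X|+|Y|-|X\cap Y|\le |T|-k-1$. (If one worries about parity making $(|T|-k)/2$ non-integral, the floor is already built into the inequality since $|X|,|Y|$ are integers.) With this, the $k$-regularity of ${\cal F}$ yields $\hat X\cap\hat Y\in{\cal F}$ and $\hat X\cup\hat Y\in{\cal F}$, which already gives the second assertion $\hat X\cup\hat Y\in{\cal F}$.

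It remains to upgrade $\hat X\cap\hat Y\in{\cal F}$ to $\hat X\cap\hat Y\in{\cal F}^k$, i.e. to check the size bound on the inner part of the intersection. This is the easy direction: $|X\cap Y|\le\min\{|X|,|Y|\}\le (|T|-k)/2$, so by definition of ${\cal F}^k$ we conclude $\hat X\cap\hat Y\in{\cal F}^k$.

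I do not anticipate a genuine obstacle here; the only subtlety is bookkeeping the passage from $|X|+|Y|\le |T|-k$ to the strict inequality $|X\cup Y|\le |T|-k-1$, and the nonemptiness of $X\cap Y$ (which is exactly the hypothesis that $\hat X$ and $\hat Y$ intersect) is what makes this go through. One should also make sure to record that, unlike for $\hat X\cap\hat Y$, we cannot in general conclude $\hat X\cup\hat Y\in{\cal F}^k$, since $|X\cup Y|$ may exceed $(|T|-k)/2$; this is why the statement only claims membership of the union in ${\cal F}$, not in ${\cal F}^k$, and the proof as outlined respects that asymmetry.
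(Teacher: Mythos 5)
Your proposal is correct and follows essentially the same argument as the paper: bound $|X\cup Y|=|X|+|Y|-|X\cap Y|\leq |T|-k-1$ using $|X|,|Y|\leq (|T|-k)/2$ and $X\cap Y\neq\emptyset$, invoke $k$-regularity to place both $\hat{X}\cap\hat{Y}$ and $\hat{X}\cup\hat{Y}$ in ${\cal F}$, and then note $|X\cap Y|\leq |X|\leq (|T|-k)/2$ to conclude $\hat{X}\cap\hat{Y}\in{\cal F}^k$. The extra remarks about integrality and about the union possibly failing to lie in ${\cal F}^k$ are accurate but not needed beyond what the paper records.
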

\begin{proof}
Since $|X|,|Y| \leq \frac{|T|-k}{2}$, we have
$|X \cup Y| = |X|+|Y|-|X \cap Y| \leq |T|-k-1$.
Thus $\hat{X} \cap \hat{Y}, \hat{X} \cap \hat{Y} \in {\cal F}$, by the $k$-regularity of ${\cal F}$.
Moreover, $\hat{X} \cap \hat{Y} \in {\cal F}^k$, since $|X \cap Y| \leq |X| \leq \frac{|T|-k}{2}$.
\end{proof}

We will prove the following two theorems that imply Theorem~\ref{t:SR}.

\begin{theorem} \label{t:edges}
Let ${\cal F}$ be a biset-family on $T$ such that both ${\cal F},\bar{\cal F}$ are crossing and $k$-regular.
Then there exists a polynomial-time algorithm that computes an edge-cover $I$ of ${\cal F}$ of size 
$|I|=\nu\left({\cal F}^k\right)+\nu\left(\bar{\cal F}^k\right)+
     {\left(\frac{3|T|}{|T|-k}\right)}^2 H\left(\frac{3|T|}{|T|-k}\right)$.
Furthermore, if ${\cal F}$ is symmetric then 
$|I|=\nu\left({\cal F}^k\right)+{\left(\frac{3|T|}{|T|-k}\right)}^2 H\left(\frac{3|T|}{|T|-k}\right)$.
\end{theorem}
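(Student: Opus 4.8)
The plan is to build the edge-cover $I$ in two conceptual stages, mirroring the decomposition of the deficient family into ``small'' and ``large'' bisets. First I would handle the cores of ${\cal F}^k$ and $\bar{\cal F}^k$. By Lemma~\ref{l:F*}, ${\cal F}^k$ is closed under intersection (for intersecting members) and its unions land back in ${\cal F}$; an analogous statement holds for $\bar{\cal F}^k$. This uncrossing property means the cores of ${\cal F}^k$ have pairwise-disjoint inner parts, so there are exactly $\nu({\cal F}^k)$ of them, and similarly $\nu(\bar{\cal F}^k)$ cores for the reverse family. The idea is to pick, for each core $\hat{C}$ of ${\cal F}^k$, one terminal $t_{\hat C}\in C$, and for each core $\hat{D}$ of $\bar{\cal F}^k$, one terminal $t_{\hat D}$ in its inner part (equivalently in $D^*\cap T$ for the corresponding member of ${\cal F}$), then add a carefully chosen matching / small star set of $\nu({\cal F}^k)+\nu(\bar{\cal F}^k)$ edges (one per core) that covers every biset of ${\cal F}$ whose inner part is ``small'' in one direction. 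Covering a core $\hat{C}$ requires an edge from $C$ to $C^*$; the subtlety is to route these edges so that a single edge per core suffices and so that after contraction the remaining uncovered family is genuinely ``thin.'' In the symmetric (undirected) case ${\cal F}=\bar{\cal F}$, so only the $\nu({\cal F}^k)$ term appears, which is exactly the claimed savings.

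Second I would deal with the residual family ${\cal F}_I$ after adding those core-covering edges: every surviving biset $\hat X$ must have $|X|>(|T|-k)/2$ and, via the reverse family, also $|X^*\cap T|>(|T|-k)/2$, so its inner part and complementary inner part are both ``large.'' The key quantitative observation is that a biset-family in which every member has inner part of size $>(|T|-k)/2$ and complement of size $>(|T|-k)/2$ can be covered greedily: the relevant ``width'' parameter is controlled by $\frac{|T|}{|T|-k}$, because inner parts of that size can be pairwise-disjoint for at most $O\!\left(\frac{|T|}{|T|-k}\right)$ of them, and likewise for complements. I would set $\ell=\Theta\!\left(\frac{|T|}{|T|-k}\right)$ (the constant $3$ in the statement is there to absorb the core-contraction losses and the $+1$ slack in $k$-regularity) and show that ${\cal F}_I$ behaves like a family on an effective ``universe'' of size $O(\ell)$ with connectivity-type requirement $O(\ell)$; then the standard set-cover / LP-rounding argument for covering a crossing or ring family by edges gives a cover of size $\ell^2 H(\ell)$, which is the ${\left(\frac{3|T|}{|T|-k}\right)}^2 H\!\left(\frac{3|T|}{|T|-k}\right)$ term. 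This is where one invokes the known primal-dual or greedy bound for covering a crossing/uncrossable biset-family whose cores number $\nu$: the cost is $O(\nu)$ per ``layer'' and there are $O(\log\nu)$ layers after the inner parts become large, giving roughly $\nu^2 H(\nu)$ with $\nu=O(\ell)$.

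The main obstacle, as I see it, is the bookkeeping that glues the two stages: I have to ensure that after contracting the cores of ${\cal F}^k$ and $\bar{\cal F}^k$ (or after adding the $\nu({\cal F}^k)+\nu(\bar{\cal F}^k)$ edges), the residual family still satisfies the hypotheses needed for the second stage — namely that it is still crossing/$k$-regular and that its own ${\cal F}^k$-part and $\bar{\cal F}^k$-part are now empty, so that all remaining members are ``doubly large.'' Showing emptiness is the crux: one must argue that a minimal member $\hat X$ with $|X|\le (|T|-k)/2$ that survived would have to contain a core of ${\cal F}^k$, but we added an edge covering that core, contradiction — and the same edge must be shown not to have destroyed the structure elsewhere. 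The counting of how large the ``doubly large'' residual family can be, and why its core count is $O\!\left(\frac{|T|}{|T|-k}\right)$ rather than $O(|T|)$, is the other delicate point, and it is precisely where the factor $\frac{|T|}{|T|-k}$ (rather than $|T|$) enters, matching the bounds of \cite{KN2,L}. The symmetric case then follows by simply not needing the reverse-family half of the argument.
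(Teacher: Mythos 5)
There is a genuine gap, and it is exactly at the point you yourself flag as ``the crux.'' Your first stage adds one edge per core of ${\cal F}^k$ and $\bar{\cal F}^k$ (a total of $\nu({\cal F}^k)+\nu(\bar{\cal F}^k)$ edges) and then asserts that the residual family has empty small parts, so that every surviving biset is ``doubly large.'' That emptiness claim is not achievable with one edge per core and you give no mechanism for it: covering the current cores does not cover the other bisets of ${\cal F}^k$ containing them, and after adding these edges new (larger) cores of ${\cal F}^k_I$ appear, so the process would have to be iterated an uncontrolled number of times. The paper does not try to empty ${\cal F}^k_I$ at all. Its Phase~1 adds edges greedily as long as a single edge strictly decreases $\nu({\cal F}^k_I)$ or $\nu(\bar{\cal F}^k_I)$ (hence at most $\nu({\cal F}^k)+\nu(\bar{\cal F}^k)$ edges, or $\nu({\cal F}^k)$ in the symmetric case), and the key missing idea in your write-up is the termination analysis of Lemma~\ref{l:bounds}(ii): when no edge helps, for each remaining core $\hat{C}$ the union $\hat{U}_C$ of the members of ${\cal F}^k_I$ containing $\hat{C}$ and no other core must satisfy $|U_C| \geq |T|-k$ (otherwise $k$-regularity puts $\hat{U}_C \in {\cal F}$ and an edge from $C$ to $U_C^*$ would reduce the core count), and since the sets $U_C$ are pairwise disjoint by Lemma~\ref{l:F*}, one gets $\nu({\cal F}^k_I),\nu(\bar{\cal F}^k_I) \leq \frac{|T|}{|T|-k}$ even though these families may be nonempty.

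Your second stage is also not a proof as stated: the residual family does not live on an ``effective universe of size $O\left(\frac{|T|}{|T|-k}\right)$,'' and there is no appeal available to a generic ``$\ell^2 H(\ell)$ cover of a crossing family on $\ell$ elements.'' What the paper actually does in Phase~2 is cover the \emph{entire} residual family ${\cal F}_I$ (small members included): it builds a fractional transversal of ${\cal C}({\cal F}_I)$ of value at most $\nu({\cal F}^k_I)+\frac{2|T|}{|T|-k} \leq \frac{3|T|}{|T|-k}$ (weight $1$ on a minimum transversal of ${\cal F}^k_I$, weight $\frac{2}{|T|-k}$ elsewhere, the latter sufficing because any core not in ${\cal F}^k_I$ has inner part larger than $(|T|-k)/2$), runs the greedy algorithm to get a transversal $T'$ of size at most $\frac{3|T|}{|T|-k} H\left(\frac{3|T|}{|T|-k}\right)$ using the degree bound $\Delta({\cal C}({\cal F}_I)) \leq \nu(\bar{\cal F}_I) \leq \frac{3|T|}{|T|-k}$ of Lemma~\ref{l:Delta}, and then, for each $s \in T'$, adds one edge to each inclusion-minimal set in $\{X^*: \hat{X}\in{\cal F}_I,\ s\in X\}$; crossingness makes these minimal sets pairwise disjoint (Lemma~\ref{l:final-cover}), giving at most $\nu(\bar{\cal F}_I)\leq\frac{3|T|}{|T|-k}$ edges per transversal vertex and hence the ${\left(\frac{3|T|}{|T|-k}\right)}^2 H\left(\frac{3|T|}{|T|-k}\right)$ term. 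Your proposal contains neither the saturation-plus-$k$-regularity counting of Phase~1 nor the transversal-of-cores/minimal-complements mechanism of Phase~2, so the two quantitative bounds in the theorem are not actually derived.
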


\begin{theorem} \label{t:stars}
Let ${\cal F}$ be a biset-family on $T$ such that both ${\cal F}$ and $\bar{\cal F}$ are $k$-regular.
Then there exists a collection of $O\left(\frac{|T|}{|T|-k} \lg \min\{\nu,|T|-k\}\right)$ stars on $T$ which union
covers ${\cal F}$, and such a collection can be computed in polynomial time.
Furthermore, the total number of edges in the stars is at most 
$\nu\left({\cal F}^k\right) + \nu\left(\bar{\cal F}^k\right) +
{\left(\frac{|T|}{|T|-k}\right)}^2 \cdot O\left(\log \frac{|T|}{|T|-k}\right)$.
\end{theorem}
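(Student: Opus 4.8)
The plan is to reduce the covering of a $k$-regular family by stars to a small number of "rooted" covering tasks, each of which is handled by a single star through a suitable choice of root. First I would observe that it suffices to produce stars covering the cores $\mathcal{C}(\mathcal{F})$: since $\mathcal{F}$ is $k$-regular, every member of $\mathcal{F}$ contains a core, so once an edge-set covers every core it covers all of $\mathcal{F}$ — and deletion of covered cores keeps the residual family $k$-regular (the remark following Lemma~\ref{l:F*} guarantees this). The key structural input is Lemma~\ref{l:F*}: restricted to the "small" subfamily $\mathcal{F}^k$, intersecting bisets are closed under intersection and union, so $\mathcal{F}^k$ behaves like a ring/crossing family and a single generic node $t\in T$ can serve as a root for a large fraction of cores whose inner parts avoid $t$. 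Concretely, for each such $t$, the cores of $\mathcal{F}^k$ not containing $t$ in their inner parts but "touching" $t$ through the complementary side can be covered by one star centered at $t$ (directed appropriately); this is the standard laminarity argument for crossing families.

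Next I would set up the iteration. Pick a random root $t\in T$; a core $\hat C\in\mathcal{C}(\mathcal{F}^k)$ has $|C|\le (|T|-k)/2$, so the probability that $t\notin C$ is at least $(|T|+k)/(2|T|)\ge 1/2$, and symmetrically using $\bar{\mathcal F}^k$ (which is also $k$-regular by hypothesis) we handle cores on the other side; a core $\hat C$ either lies in $\mathcal{F}^k$ or its reverse lies in $\bar{\mathcal F}^k$ because $|C|$ and $|T\setminus C^+|$ cannot both exceed $(|T|-k)/2$. So with a constant number of random roots we cover a constant fraction of all cores in $\mathcal{F}^k\cup$ (reverses in) $\bar{\mathcal F}^k$ by a constant number of stars. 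The cores that remain — those with inner part of size strictly greater than $(|T|-k)/2$ on both sides — are few: their inner parts are "large," so at most $O(|T|/(|T|-k))$ of them can be pairwise-disjoint, and by the crossing/regularity uncrossing one shows the laminar family of such cores has depth $O(|T|/(|T|-k))$, hence $\nu$ of the whole residual family drops geometrically. Iterating, after $O\!\left(\frac{|T|}{|T|-k}\lg\min\{\nu,|T|-k\}\right)$ rounds — each adding $O(|T|/(|T|-k))$ stars — all cores are covered; this yields the stated bound on the number of stars.

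For the second assertion, the total edge count, I would not bound the stars trivially (each could have up to $|T|-1$ leaves). Instead I would invoke Theorem~\ref{t:edges}: once we know $\mathcal{F}$ is coverable, run the edge-cover algorithm of Theorem~\ref{t:edges} on $\mathcal F$ to obtain an edge-cover $I$ with $|I|=\nu(\mathcal{F}^k)+\nu(\bar{\mathcal F}^k)+\left(\frac{3|T|}{|T|-k}\right)^2H\!\left(\frac{3|T|}{|T|-k}\right)$, and note $H(m)=O(\log m)$, so $\left(\frac{3|T|}{|T|-k}\right)^2H\!\left(\frac{3|T|}{|T|-k}\right)=\left(\frac{|T|}{|T|-k}\right)^2\cdot O\!\left(\log\frac{|T|}{|T|-k}\right)$. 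Any edge $uv$ is a two-leaf star (or a degenerate one-edge star), so $I$ itself is a collection of stars whose total number of edges equals $|I|$; but it may have too many stars. The fix is to combine: use the $O\!\left(\frac{|T|}{|T|-k}\lg\min\{\nu,|T|-k\}\right)$ stars from the iteration to get the star-count bound, and separately observe that the union of those stars can be "sparsified" to the edge-cover $I$ from Theorem~\ref{t:edges} — i.e., the two outputs are compatible because any cover can be post-processed by the Theorem~\ref{t:edges} algorithm — giving simultaneously few stars and few total edges. The main obstacle I anticipate is precisely this reconciliation: ensuring the same edge-set witnesses both bounds, or more honestly, arguing that the iterative construction can be run so that within each round the star is chosen minimally (remove redundant leaves), so that the per-round edge contribution telescopes against the drop in $\nu$ exactly as in the proof of Theorem~\ref{t:edges}. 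Making the bookkeeping of "$\nu$ decreases geometrically while edges accumulate like the harmonic sum" precise is the delicate part; the rest is the now-routine random-root uncrossing argument.
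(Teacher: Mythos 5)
Your high-level plan (drive the number of cores of ${\cal F}^k$ down geometrically by stars at well-chosen centers, then clean up using the fact that only $O\bigl(\frac{|T|}{|T|-k}\bigr)$ ``large'' cores can be pairwise disjoint) has the right shape, but the engine that must produce the geometric decrease is missing, and the claims you substitute for it fail. Adding an edge from a node $t\notin C$ into a core $\hat C$ does not make progress: you would need $t\in C^*$, not merely $t\notin C$ (the boundary may contain $k$ terminals), and even then covering $\hat C$ alone does not decrease $\nu({\cal F}^k)$, because members of ${\cal F}^k$ that contain $\hat C$ and have $t$ in their inner part or boundary survive and spawn new cores. The paper's proof fixes exactly this with the notion of \emph{out-covering}: for each min-core $\hat C$ it considers $\hat U_C$, the union of all members of ${\cal F}^k$ containing $\hat C$ and no other min-core, requires the star center $s$ to lie in $U_C^*$, and proves (Lemma~\ref{l:cover}) by counting over the hypergraph $\bigl\{T\setminus\Gamma(\hat U_C)\bigr\}$ that some $s$ out-covers at least $\nu\bigl(1-\frac{k}{|T|}\bigr)-1$ min-cores, and that a star from such an $s$ into those cores reduces the number of min-cores by at least half that amount. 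The contraction ratio is therefore $\alpha=\frac12\bigl(1+\frac{k}{|T|}\bigr)$, not an absolute constant, and this is precisely where the factor $\frac{|T|}{|T|-k}$ in the theorem comes from; your claim that a constant number of random roots handles a constant fraction of cores cannot hold when $k=|T|-o(|T|)$. In addition, your dichotomy ``a core either lies in ${\cal F}^k$ or its reverse lies in $\bar{\cal F}^k$'' is false, since $|C|+|T\setminus C^+|=|T|-|\Gamma(\hat C)|$ can be as large as $|T|$, so both sides can exceed $(|T|-k)/2$; the paper handles such bisets only at the end, via Lemma~\ref{l:bounds}(i). Finally your accounting is internally inconsistent: $O\bigl(\frac{|T|}{|T|-k}\log\min\{\nu,|T|-k\}\bigr)$ rounds each adding $O\bigl(\frac{|T|}{|T|-k}\bigr)$ stars overshoots the claimed star bound by a factor $\frac{|T|}{|T|-k}$; in the paper each iteration adds a single star.

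For the edge-count assertion you explicitly leave the reconciliation open, and invoking Theorem~\ref{t:edges} as a black box cannot close it: its output is a set of single edges, hence up to $\nu({\cal F}^k)+\nu(\bar{\cal F}^k)+\Theta\bigl({(\frac{|T|}{|T|-k})}^2\log\frac{|T|}{|T|-k}\bigr)$ stars, which destroys the star-count bound. The paper's route is different: run the greedy phase so that each star has exactly as many leaves as min-cores it out-covers; Lemma~\ref{l:computations} then telescopes the total number of leaves to at most $2\bigl(\nu-\frac{|T|}{|T|-k}\bigr)$ while the number of min-cores of ${\cal F}^k_I$ (and, by a symmetric run, of $\bar{\cal F}^k_I$) drops to $O\bigl(\frac{|T|}{|T|-k}\bigr)$; only then is Phase~2 of the proof of Theorem~\ref{t:edges} applied to the residual family, yielding a transversal of size $O\bigl(\frac{|T|}{|T|-k}\log\frac{|T|}{|T|-k}\bigr)$, hence at most that many further stars with ${\bigl(\frac{|T|}{|T|-k}\bigr)}^2\cdot O\bigl(\log\frac{|T|}{|T|-k}\bigr)$ edges in total, covering all of ${\cal F}_I$ including the large bisets. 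Without the out-covering lemma and this two-phase bookkeeping, neither of your two bounds is established.
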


Note that the second statement in Theorem~\ref{t:stars} implies (up to constants)
the bound in Theorem~\ref{t:edges}.
However, the proof of Theorem~\ref{t:edges} is much simpler than the proof of Theorem~\ref{t:stars},
and the proof of Theorem~\ref{t:edges} is a part of the proof of 
the second statement in Theorem~\ref{t:stars}.

Let us show that Theorems \ref{t:edges} and \ref{t:stars} imply Theorem~\ref{t:SR}.
For that, all we need is to show that by applying one time the $\alpha$-approximation algorithm 
for the {\sf Rooted Subset $k$-Connectivity Augmentation}, we obtain an instance with 
$\nu\left({\cal F}^k\right),\nu\left(\bar{\cal F}^k\right) \leq k+1$. 
This is achieved by the following procedure due to 
Khuller and Raghavachari \cite{KR} that originally considered the case $T=V$, see also \cite{ADNP,DN,KN1};
the same procedure is also used by Laekhanukit in \cite{L}.

Choose an arbitrary subset $T' \subseteq T$ of $k+1$ nodes, add a new node $s$ (the root) 
and all edges between $s$ and $T'$ of cost zero each, both to $G$ and to $J$. 
Then, using the $\alpha$-approximation algorithm for the 
{\sf Rooted Subset $k$-Connectivity Augmentation}, compute an augmenting edge set $F$  
such that $J \cup F$ contains $k$ internally disjoint $vs$-paths and $sv$-paths for every $v \in T'$.
Now, add $F$ to $J$ and remove $s$ from $J$.
It is a routine to show that $c(F) \leq b {\sf opt}$, and that for edge-costs 
$c(F) \leq b \tau^*$. It is also known that if $\hat{X}$ is a tight biset of the obtained graph $J$,
then $X \cap T', X^* \cap T' \neq \emptyset$, c.f. \cite{ADNP,L}.
Combined with Lemma~\ref{l:F*} we obtain that 
$\nu\left({\cal F}^k\right),\nu\left(\bar{\cal F}^k\right) \leq |T'| \leq k+1$
for the obtained instance, as claimed.

\section{Proof of Theorem \ref{t:edges}} \label{s:edges}

\begin{definition}
Given a biset-family ${\cal F}$ on $T$, let 
$\Delta({\cal F})$ denote the maximum degree in the hypergraph 
${\cal F}^{in}=\{X:\hat{X} \in {\cal F}\}$ of the inner parts of the bisets in ${\cal F}$.
We say that $T' \subseteq T$ is a {\em transversal} of ${\cal F}$ if 
$T' \cap X \neq \emptyset$ for every $X \in {\cal F}^{in}$; a function 
$t:T \rightarrow [0,1]$ is a {\em fractional transversal} of ${\cal F}$ if 
$\sum_{v \in X} t(v) \geq 1$ for every $X \in {\cal F}^{in}$.
\end{definition}

\begin{lemma} \label{l:Delta}
Let ${\cal F}$ be a crossing biset-family. Then
$\Delta({\cal C}({\cal F})) \leq \nu\left(\bar{\cal F}\right)$.
\end{lemma}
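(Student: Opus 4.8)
The plan is to localize at a single terminal realizing the maximum core-degree, and read off from the cores through that terminal a large packing of pairwise-disjoint inner parts in the reverse family $\bar{\cal F}$.

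First I would pick $v \in T$ attaining the maximum degree $d := \Delta({\cal C}({\cal F}))$ in the hypergraph of inner parts of cores, and let $\hat{C}_1,\dots,\hat{C}_d$ be the distinct cores of ${\cal F}$ whose inner parts contain $v$. The heart of the argument is the claim that no two of these cores cross. Suppose $\hat{C}_i$ and $\hat{C}_j$ cross for some $i\neq j$. Since ${\cal F}$ is crossing, $\hat{C}_i \cap \hat{C}_j \in {\cal F}$, and by definition of biset intersection $\hat{C}_i \cap \hat{C}_j \subseteq \hat{C}_i$. As $\hat{C}_i$ is a core it contains no member of ${\cal F}\setminus\{\hat{C}_i\}$, hence $\hat{C}_i \cap \hat{C}_j = \hat{C}_i$; symmetrically $\hat{C}_i \cap \hat{C}_j = \hat{C}_j$, so $\hat{C}_i = \hat{C}_j$, a contradiction.

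Next I observe that $v \in C_i \cap C_j$ for all $i,j$, so the bisets $\hat{C}_1,\dots,\hat{C}_d$ pairwise intersect; combined with the claim that they do not cross, this forces $C_i^* \cap C_j^* = \emptyset$ for every $i \neq j$. Now the reverse bisets $(T\setminus C_i^+,\, T\setminus C_i) \in \bar{\cal F}$, $i=1,\dots,d$, have inner parts $T\setminus C_i^+ = C_i^*$, which are pairwise disjoint. Therefore $\nu\!\left(\bar{\cal F}\right) \geq d = \Delta({\cal C}({\cal F}))$, which is exactly the asserted inequality.

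I expect this to be essentially the entire proof; the only step requiring care is the biset bookkeeping in the ``no two cores cross'' claim — in particular, using the precise definition of biset containment to conclude from ``$\hat{C}_i \cap \hat{C}_j \subseteq \hat{C}_i$ and $\hat{C}_i \cap \hat{C}_j \in {\cal F}$'' that $\hat{C}_i \cap \hat{C}_j = \hat{C}_i$. A minor degenerate situation, namely several reverse bisets sharing an empty inner part, is harmless, since empty sets count as pairwise disjoint in the definition of $\nu$.
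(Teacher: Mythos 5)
Your proof is correct and follows essentially the same route as the paper: cores of a crossing family are pairwise non-crossing, so the cores through a maximum-degree terminal have pairwise disjoint complementary sets, which yields a packing of that size in $\bar{\cal F}$ and hence the bound $\Delta({\cal C}({\cal F})) \leq \nu\left(\bar{\cal F}\right)$. The only difference is that you spell out the core-minimality argument for non-crossing, which the paper leaves implicit.
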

\begin{proof}
Since ${\cal F}$ is crossing, the members of ${\cal C}({\cal F})$ are pairwise non-crossing.
Thus if ${\cal H}$ is a subfamily of ${\cal C}({\cal F})$ 
such that the intersection of the inner parts of the bisets in ${\cal H}$ is non-empty,
then $\bar{\cal H}$ is a subfamily of $\bar{\cal F}$ 
such that the inner parts of the bisets in $\bar{\cal H}$ are pairwise disjoint,
so $|\bar{\cal H}| \leq \nu\left(\bar{\cal F}\right)$.
The statement follows.
\end{proof}

\begin{lemma} \label{l:final-cover}
Let $T'$ be a transversal of a biset-family ${\cal F}'$ on $T$ and let $I'$ be an edge-set on $T$ 
obtained by picking for every $s \in T'$ an edge from $s$ to every inclusion member 
of the set-family $\{X^*:\hat{X} \in {\cal F}', s \in X\}$. Then $I'$ covers ${\cal F}'$.
Moreover, if ${\cal F}'$ is crossing then $|I'| \leq |T'| \cdot \nu(\bar{\cal F}')$.
\end{lemma}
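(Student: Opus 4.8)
The statement to prove (Lemma~\ref{l:final-cover}) has two parts: first, that the constructed edge-set $I'$ covers every biset in ${\cal F}'$; second, a size bound $|I'| \le |T'| \cdot \nu(\bar{\cal F}')$ when ${\cal F}'$ is crossing. I would treat these separately.

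For the covering claim: take any $\hat{X} \in {\cal F}'$. Since $T'$ is a transversal of ${\cal F}'$, there is some $s \in T' \cap X$. Then $X^* = V \setminus X^+$ lies in the set-family $\{Y^* : \hat{Y} \in {\cal F}', s \in Y\}$, so it contains (or equals) some inclusion-minimal member $Z^*$ of that family; by construction $I'$ contains an edge from $s$ to some node of $Z^* \subseteq X^*$. That edge goes from $s \in X$ to a node in $X^*$, hence covers $\hat{X}$. This is the easy half — essentially a direct unwinding of the definitions plus Menger-style covering.

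For the size bound: fix $s \in T'$ and consider the set-family ${\cal D}_s = \{X^* : \hat{X} \in {\cal F}', s \in X\}$. The edges of $I'$ incident to $s$ are in bijection with the inclusion-minimal members of ${\cal D}_s$, so it suffices to bound the number of such minimal sets by $\nu(\bar{\cal F}')$. The key observation is that if $X^*$ and $Y^*$ are two distinct inclusion-minimal members of ${\cal D}_s$, then $\hat{X},\hat{Y} \in {\cal F}'$ both contain $s$ in their inner parts, so $s \in X \cap Y$ and the bisets intersect; moreover neither $X^* \subseteq Y^*$ nor $Y^* \subseteq X^*$ (by minimality and distinctness). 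I expect this forces $X^* \cap Y^* = \emptyset$: if they crossed, crossing-ness of ${\cal F}'$ would put $\hat{X} \cup \hat{Y} \in {\cal F}'$ with $s \in X \cup Y$ and $(X \cup Y)^* = X^* \cap Y^* \subsetneq X^*$, contradicting minimality of $X^*$ in ${\cal D}_s$. Hence the inclusion-minimal members of ${\cal D}_s$ have pairwise-disjoint complementary sets; passing to the reverse family $\bar{\cal F}'$, these correspond to bisets whose inner parts $T \setminus X^+ \subseteq X^*$ are... well, one must check the inner parts (not just the complementary sets) are disjoint, which follows since the inner part of the reverse biset is contained in $X^*$. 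So their number is at most $\nu(\bar{\cal F}')$, and summing over $s \in T'$ gives $|I'| \le |T'|\cdot\nu(\bar{\cal F}')$.

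The main obstacle will be the size-bound argument — specifically making the crossing/minimality contradiction airtight for bisets (as opposed to plain sets): one has to verify that $\hat X\cup\hat Y$ really lies in ${\cal F}'$ (crossing requires $X^*\cap Y^*\neq\emptyset$, which is exactly the hypothesis that $X^*,Y^*$ are not disjoint), and that the resulting biset strictly refines $X^*$ in the relevant order. The translation "disjoint complementary sets $\Rightarrow$ disjoint inner parts of reverse bisets" is the step where the distinction between $X$, $X^+$, and $X^*$ must be handled carefully rather than glossed over.
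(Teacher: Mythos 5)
Your proposal is correct and follows essentially the same route as the paper: the covering part is the direct unwinding the paper calls obvious, and your uncrossing argument (if two inclusion-minimal complementary sets met, $\hat X\cup\hat Y\in{\cal F}'$ would yield a strictly smaller member of $\{X^*:\hat X\in{\cal F}',\,s\in X\}$) is exactly the justification behind the paper's one-line claim that these minimal sets are pairwise disjoint, whence at most $\nu(\bar{\cal F}')$ of them per $s\in T'$. The only point to note is that since the family lives on groundset $T$, the inner part of the reverse biset is precisely $X^*=T\setminus X^+$, so your "contained in $X^*$" step is in fact an equality and the disjointness transfers immediately.
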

\begin{proof}
The statement that $I'$ covers ${\cal F}'$ is obvious. 
If ${\cal F}'$ is crossing, then for every $s \in T$ the inclusion-minimal 
members of $\{X^*:\hat{X} \in {\cal F}', s \in X\}$ are pairwise-disjoint,
hence their number is at most $\nu(\bar{\cal F}')$. The statement follows.
\end{proof}

\begin{lemma} \label{l:bounds}
Let ${\cal F}$ be a $k$-regular biset-family on $T$. Then the following holds.
\begin{itemize}
\item[{\em (i)}]
$\nu({\cal F}) \leq \nu\left({\cal F}^k\right)+\frac{2|T|}{|T|-k}$. 
\item[{\em (ii)}]
If $\nu\left({\cal F}^k_{\{e\}}\right) = \nu\left({\cal F}^k\right)$ holds for every edge $e$ on $T$
then $\nu\left({\cal F}^k\right) \leq \frac{|T|}{|T|-k}$.
\item[{\em (iii)}]
There exists a polynomial time algorithm that finds 
a transversal $T'$ of ${\cal C}({\cal F})$ of size at most
$|T'| \leq \left(\nu\left({\cal F}^k\right)+\frac{2|T|}{|T|-k}\right) \cdot H(\Delta({\cal C}({\cal F})))$.
\end{itemize}
\end{lemma}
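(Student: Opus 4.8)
The plan is to prove the three parts of Lemma~\ref{l:bounds} in order, since later parts build on earlier ones.

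\textbf{Part (i).} The goal is to bound $\nu({\cal F})$ in terms of $\nu({\cal F}^k)$. Take a maximum subfamily ${\cal H} \subseteq {\cal F}$ whose inner parts are pairwise disjoint, so $|{\cal H}| = \nu({\cal F})$. Split ${\cal H}$ into ${\cal H}_{\mathrm{small}} = {\cal H} \cap {\cal F}^k$ (those with $|X| \le (|T|-k)/2$) and ${\cal H}_{\mathrm{big}} = {\cal H} \setminus {\cal F}^k$ (those with $|X| > (|T|-k)/2$). Clearly $|{\cal H}_{\mathrm{small}}| \le \nu({\cal F}^k)$. For the big ones, since the inner parts are disjoint and each has size more than $(|T|-k)/2$, and all inner parts lie in $T$ (a set of size $|T|$), there can be at most $|T|/((|T|-k)/2) = 2|T|/(|T|-k)$ of them. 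Summing gives the claimed bound. The subtlety here is essentially nothing — just a counting argument — although one should double-check whether we need disjointness of inner parts within ${\cal H}_{\mathrm{small}}$ to conclude $|{\cal H}_{\mathrm{small}}| \le \nu({\cal F}^k)$, which holds by definition of $\nu$.

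\textbf{Part (ii).} Now suppose $\nu({\cal F}^k_{\{e\}}) = \nu({\cal F}^k)$ for every edge $e$ on $T$; we want $\nu({\cal F}^k) \le |T|/(|T|-k)$. The hypothesis says no single edge decreases the maximum disjoint-inner-parts packing number of ${\cal F}^k$. The natural approach: let ${\cal H}$ be a maximum packing in ${\cal F}^k$, $|{\cal H}| = \nu({\cal F}^k) =: \nu$. I would argue that every member of ${\cal F}^k$ must actually have $|X| > (|T|-k)/2$ — wait, that cannot be, since members of ${\cal F}^k$ satisfy $|X| \le (|T|-k)/2$ by definition. So instead the argument must be: if some core $\hat C \in {\cal C}({\cal F}^k)$ had $|C|$ strictly less than $(|T|-k)/2$ by a margin, one could... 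Actually the cleaner route is: if the packing number is not dropped by any edge $e = uv$, then for each such $e$ there is a member of ${\cal F}^k$ disjoint from $e$ but that still "uses up" its slot. The key must be to apply Lemma~\ref{l:F*}: intersecting members of ${\cal F}^k$ have their intersection in ${\cal F}^k$, so ${\cal F}^k$ is "intersection-closed" and its cores are pairwise disjoint, giving $\nu({\cal C}({\cal F}^k)) = \nu({\cal F}^k)$. Then the edge-stability hypothesis forces each core $\hat C$ to satisfy: removing any edge inside $C$ (or incident to $C$) does not destroy the packing, which via a Menger/counting argument on the cores forces $\sum_{\hat C \in {\cal C}({\cal F}^k)} |C| \le |T|$ combined with each $|C| \ge |T|-k$ — this last inequality must come from $k$-regularity together with the stability assumption (a core that is too small could be shrunk or covered). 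Pinning down exactly why each core has inner part of size $\ge |T|-k$ under the hypothesis is, I expect, \emph{the main obstacle}; it likely uses that if $|C| < |T|-k$ then an appropriately chosen edge would reduce the packing, contradicting the hypothesis. Granting $|C| \ge |T|-k$ for each core and disjointness, we get $\nu({\cal F}^k) \cdot (|T|-k) \le |T|$, i.e. $\nu({\cal F}^k) \le |T|/(|T|-k)$.

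\textbf{Part (iii).} Finally, I want a polynomial-time algorithm producing a transversal $T'$ of ${\cal C}({\cal F})$ with $|T'| \le (\nu({\cal F}^k) + 2|T|/(|T|-k)) \cdot H(\Delta({\cal C}({\cal F})))$. This is a weighted set-cover / hitting-set instance: the ground set is $T$, and we must hit every core's inner part. The standard greedy algorithm for set cover gives an $H(d)$-approximation where $d$ bounds the set sizes — but here I want the bound relative to $\nu({\cal F})$, which by part (i) is at most $\nu({\cal F}^k) + 2|T|/(|T|-k)$. The trick is the dual: $\nu({\cal C}({\cal F}))$ (max number of cores with pairwise-disjoint inner parts) is a lower bound on the minimum transversal size, and greedy set-cover applied to the \emph{dual} hypergraph — whose "elements" are the cores and whose "sets" are indexed by vertices $v \in T$, namely the cores containing $v$ — has max set size $\Delta({\cal C}({\cal F}))$, so greedy finds a transversal of size at most $H(\Delta({\cal C}({\cal F}))) \cdot \nu({\cal C}({\cal F})) \le H(\Delta({\cal C}({\cal F}))) \cdot \nu({\cal F}) \le H(\Delta({\cal C}({\cal F}))) \cdot (\nu({\cal F}^k) + 2|T|/(|T|-k))$, using part (i) in the last step. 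Polynomiality follows from the assumption that cores of residual families can be computed in polynomial time (stated earlier in the excerpt), so the greedy step — pick the vertex hitting the most remaining cores — is implementable. The only care needed is that after picking a vertex we can recompute the surviving cores efficiently, which is exactly what the earlier assumption grants.
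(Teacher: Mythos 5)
Your part (i) is fine and matches the paper (which treats it as immediate). The problems are in (ii) and (iii).

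For (ii), your sketch collapses at exactly the point you flag as ``the main obstacle'': you propose to force $|C| \geq |T|-k$ for every core $\hat{C}$ of ${\cal F}^k$, but every member of ${\cal F}^k$ has $|X| \leq (|T|-k)/2$ by definition, so this inequality can never hold and cannot be the goal of the argument. The missing idea is to attach to each core $\hat{C} \in {\cal C}({\cal F}^k)$ the biset $\hat{U}_C$ defined as the union of all members of ${\cal F}^k$ that contain $\hat{C}$ and contain no other core. If $|U_C| \leq |T|-k-1$, then $k$-regularity (applied repeatedly via Lemma~\ref{l:F*}) gives $\hat{U}_C \in {\cal F}$, and then a single edge from $C$ to $U_C^*$ covers every member of ${\cal F}^k$ whose only core is $\hat{C}$; since ${\cal F}^k$ is intersection-closed (Lemma~\ref{l:F*}) and hence $\nu({\cal F}^k)$ equals the number of cores, this edge decreases $\nu({\cal F}^k)$, contradicting the hypothesis. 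Hence $|U_C| \geq |T|-k$ for every core --- the size bound lives on $U_C$, not on $C$ --- and since the sets $U_C$ are pairwise disjoint (again by Lemma~\ref{l:F*}: two intersecting members with distinct unique cores would force a common core), summing over cores gives $\nu({\cal F}^k)(|T|-k) \leq |T|$.

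For (iii), your chain ``greedy $\leq H(\Delta({\cal C}({\cal F}))) \cdot \nu({\cal C}({\cal F}))$'' is not a valid inequality: the greedy/Lov\'{a}sz bound is relative to the minimum \emph{fractional transversal} $\tau^*$, and $\tau^*$ is in general much larger than the integral packing number $\nu$ (think of hypergraphs with $\nu=1$ and large $\tau$); you are implicitly using $\tau^* \leq \nu$, which is LP duality run backwards, and the cores of ${\cal F}$ (unlike those of ${\cal F}^k$) need not have disjoint inner parts, so no special structure rescues this. The paper instead exhibits an explicit fractional transversal of ${\cal C}({\cal F})$ of value $\nu({\cal F}^k)+\frac{2|T|}{|T|-k}$: take an inclusion-minimal transversal $T^k$ of ${\cal F}^k$ (which has size exactly $\nu({\cal F}^k)$ because ${\cal F}^k$ is intersection-closed), give its nodes weight $1$, and give every other node weight $\frac{2}{|T|-k}$; cores lying in ${\cal F}^k$ are hit by $T^k$, while any other core has inner part of size greater than $(|T|-k)/2$ and so collects total weight at least $1$ from the uniform part. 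Applying Lov\'{a}sz's greedy bound to this fractional transversal, with set sizes bounded by $\Delta({\cal C}({\cal F}))$, yields the stated estimate; your appeal to part (i) plays no role in the paper's argument and cannot substitute for this construction.
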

\begin{proof}
Part~(i) is immediate.

We prove (ii). Let $\hat{C} \in {\cal C}\left({\cal F}^k\right)$ and let $\hat{U}_C$
be the union of the bisets in ${\cal F}^k$ that contain $\hat{C}$ and contain no other 
member of ${\cal C}\left({\cal F}^k\right)$. If $|U_C| \leq |T|-k-1$ then 
$\hat{U}_C \in {\cal F}$, by the $k$-regularity of ${\cal F}$.
In this case $\nu\left({\cal F}^k_{\{e\}}\right) \leq \nu\left({\cal F}^k\right)-1$
for any edge from $C$ to $U_C^*$. Hence $|U_C| \geq |T|-k$ must hold for every $\hat{C} \in {\cal C}({\cal F})$.
By Lemma~\ref{l:F*}, the sets in the set family $\{U_C: \hat{C} \in {\cal C}({\cal F})\}$ are pairwise disjoint.
The statement follows.

We prove (iii). 
Let $T^k$ be an inclusion-minimal transversal of ${\cal F}^k$.
By Lemma~\ref{l:F*}, $\left|T^k\right|=\nu\left({\cal F}^k\right)$.
Setting $t(v)=1$ if $v \in T^k$ and $t(v)=\frac{2}{|T|-k}$ otherwise,
we obtain a fractional transversal of ${\cal C}({\cal F})$ of value at most 
$\nu\left({\cal F}^k\right)+\frac{2|T|}{|T|-k}$.
Consequently, the greedy algorithm of Lov\'{a}sz \cite{Lov} finds a transversal $T'$ as claimed. 
\end{proof}

The algorithm for computing $I$ as in Theorem~\ref{t:edges} starts with $I=\emptyset$ and then
continues as follows. 

\vspace{0.3cm} 

\noindent
{\bf Phase~1} \\
While there exists an edge $e$ on $T$ 
such that $\nu\left({\cal F}^k_{I \cup \{e\}}\right) \leq \nu\left({\cal F}^k_I\right)-1$, or 
such that $\nu\left(\bar{\cal F}^k_{I \cup \{e\}}\right) \leq \nu\left(\bar{\cal F}^k_I\right)-1$,
add $e$ to $I$. \\

\noindent
{\bf Phase~2} \\
Find a transversal $T'$ of ${\cal C}({\cal F}')$ as in Lemma~\ref{l:bounds}(iii),
where ${\cal F}'={\cal F}_I$.
Then find an edge-cover $I'$ of ${\cal F}'$ as in Lemma~\ref{l:final-cover} and add $I'$ to $I$.

\vspace{0.3cm}

The edge-set $I$ computed covers ${\cal F}$ by Lemma~\ref{l:final-cover}.
Clearly, the number of edges in $I$ at the end of Phase~1 is at most 
$\nu\left({\cal F}^k\right)+\nu\left(\bar{\cal F}^k\right)$, and is at most 
$\nu\left({\cal F}^k\right)$ if ${\cal F}$ is symmetric.
Now we bound the size of $I'$. 
Note that at the end of Phase~1 we have 
$\nu\left({\cal F}_I^k\right), \nu\left(\bar{\cal F}_I^k\right) \leq \frac{|T|}{|T|-k}$ (by Lemma~\ref{l:bounds}(ii))
and thus $\nu\left(\bar{\cal F}_I\right) \leq \frac{3|T|}{|T|-k}$ (by Lemma~\ref{l:bounds}(i)) and 
$\Delta({\cal C}({\cal F}_I)) \leq \nu\left(\bar{\cal F}_I\right) \leq 
\nu\left(\bar{\cal F}_I^k\right)+\frac{2|T|}{|T|-k} \leq \frac{3|T|}{|T|-k}$
(by Lemma~\ref{l:Delta}).
Consequently, 
$|T'| \leq \left(\nu\left({\cal F}_I^k\right)+\frac{2|T|}{|T|-k}\right) \cdot H(\Delta({\cal C}({\cal F}_I))) \leq 
\frac{3|T|}{|T|-k} \cdot H\left( \frac{3|T|}{|T|-k}\right)$.
From this we get
$|I'| \leq |T'| \cdot \nu\left(\bar{\cal F}_I\right) \leq 
{\left(\frac{3|T|}{|T|-k}\right)}^2 \cdot H\left( \frac{3|T|}{|T|-k}\right)$.

The proof of Theorem~\ref{t:edges} is now complete.

\section{Proof of Theorem~\ref{t:stars}} \label{s:stars}

We start by analyzing the performance of a natural {\sf Greedy Algorithm} for covering
$\nu\left({\cal F}^k\right)$, that starts with $I=\emptyset$ and while $\nu({\cal F}^k_I) \geq 1$
adds to $I$ a star $S$ for which $\nu({\cal F}^k_{I \cup S})$ is minimal.
It is easy to see that the algorithm terminates since any star with center $s$ 
in the inner part of some core of ${\cal F}^k_I$ and edge set $\{vs:v \in T \setminus \{s\}\}$
reduces the number of cores by one. The proof of the following statement is similar to the proof
of the main result of \cite{KN2}.

\begin{lemma} \label{l:greedy}
Let ${\cal F}$ be a $k$-regular biset-family and let
${\cal S}$ be the collection of stars computed by the {\sf Greedy Algorithm}. Then
$$
|{\cal S}|= O\left(\frac{|T|}{|T|-k}  \ln \min\left\{\nu\left({\cal F}^k\right),|T|-k\right\}\right)
\ .$$
\end{lemma}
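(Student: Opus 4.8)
The plan is to analyze the \textsf{Greedy Algorithm} via a potential/progress argument on the quantity $\nu\left({\cal F}^k_I\right)$, the number of cores of the residual family restricted to small bisets. Write $\nu_i$ for the value of $\nu\left({\cal F}^k_I\right)$ after $i$ greedy steps, so $\nu_0=\nu\left({\cal F}^k\right)=:\nu$ and we stop once $\nu_i=0$. The crux is to show that in each step there is a \emph{single} star whose addition decreases $\nu_i$ by a $\Theta\left(\frac{|T|-k}{|T|}\right)$ fraction, i.e. $\nu_{i+1}\le\left(1-c\,\frac{|T|-k}{|T|}\right)\nu_i$ for an absolute constant $c$; iterating this from $\nu$ down to $1$ gives $O\left(\frac{|T|}{|T|-k}\ln\nu\right)$ steps, and the $\ln(|T|-k)$ alternative in the bound comes from observing that once $\nu_i$ has dropped below some function of $|T|-k$ one can finish more directly (or, equivalently, that the number of \emph{disjoint} small cores is itself at most $|T|-k$ up to constants, because each core has inner part of size $\le(|T|-k)/2$ so at most $2$ cores... — more carefully, each core meets $T$, so $\nu\left({\cal F}^k\right)\le|T|$ trivially, and the sharper comparison with $|T|-k$ uses Lemma~\ref{l:bounds}). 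So the whole proof reduces to the one-step progress lemma.

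To prove the one-step progress, I would fix the current residual family ${\cal G}={\cal F}^k_I$, let $\hat C_1,\dots,\hat C_{\nu_i}$ be a maximum collection of cores with pairwise-disjoint inner parts (so $\nu_i=\nu\left({\cal G}\right)$), and pick a \emph{uniformly random} element $s$ from a carefully chosen set, then argue that the \emph{expected} number of surviving cores after adding the star $S_s=\{vs:v\in T\setminus\{s\}\}$ (or its reverse) is at most $\left(1-c\,\frac{|T|-k}{|T|}\right)\nu_i$; since greedy picks the best star it does at least as well. The natural candidate for $s$: the inner part of a core $\hat C_j$ together with the fact that ${\cal G}$ is $k$-regular forces structure. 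Concretely, by Lemma~\ref{l:F*} the union of all bisets of ${\cal G}$ containing a fixed core $\hat C_j$ but no other $\hat C_\ell$ is again in ${\cal F}$ (when its inner part is small), and as argued in the proof of Lemma~\ref{l:bounds}(ii) this union $\hat U_j$ must satisfy $|U_j|\ge|T|-k$. Thus the sets $U_1,\dots,U_{\nu_i}$ are pairwise disjoint and each has size $\ge|T|-k$, which already forces $\nu_i\le\frac{|T|}{|T|-k}$ once greedy can no longer make ``free'' progress — but before that point, the point is that a random $s$ chosen from $\bigcup_j U_j$ lands in a given $U_j$ with probability proportional to $|U_j|/|T|\ge\frac{|T|-k}{|T|}$, and when $s\in U_j$ the star centered at $s$ kills core $\hat C_j$ and cannot create new cores disjoint from the others. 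Summing the kill-probabilities over $j$ gives expected decrease $\ge\frac{|T|-k}{|T|}\cdot\nu_i\cdot(\text{const})$. This is essentially the argument of \cite{KN2} adapted to bisets via Lemma~\ref{l:F*}.

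Two technical points need care. First, adding a star can in principle \emph{create} new cores of ${\cal F}^k_{I\cup S}$ that were not cores before (a biset previously dominated by a now-covered smaller biset can become minimal); one must check this does not blow up $\nu$. The standard fix is that $k$-regularity and crossing properties are inherited by residual families (stated in the excerpt right after the definition of residual families), and that a newly-minimal uncovered biset must contain some old core's inner part, so the disjoint-cores count cannot increase — here the restriction to ${\cal F}^k$ (inner parts of size $\le(|T|-k)/2$) is exactly what makes Lemma~\ref{l:F*} applicable so that intersections stay in the family. Second, one must handle directed graphs: there we run the argument on both ${\cal F}^k$ and $\bar{\cal F}^k$, using stars directed toward and away from the root respectively, which is why the hypothesis requires both ${\cal F}$ and $\bar{\cal F}$ to be $k$-regular; the two runs are independent and their star-counts add, changing only the constant. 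The main obstacle I anticipate is making the ``expected number of surviving cores'' bookkeeping rigorous — precisely controlling how a single random star simultaneously affects all $\nu_i$ cores and ruling out the creation of too many new small cores — rather than the arithmetic of unrolling the recursion, which is routine once the per-step multiplicative decrease is in hand.
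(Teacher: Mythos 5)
Your high-level skeleton (greedy on stars, a per-step decrease of the form $\nu_{i+1}\le\left(1-c\,\frac{|T|-k}{|T|}\right)\nu_i+O(1)$, unrolling the recursion, and a final phase of one core per star once $\nu_i=O\left(\frac{|T|}{|T|-k}\right)$) is exactly the paper's plan, but the step you yourself call the crux --- the one-step progress lemma --- is argued by a mechanism that does not work. You choose the star's center $s$ \emph{inside} some $U_j$ and claim that then the star kills $\hat{C}_j$. First, $s\in U_j$ does not even guarantee that $\hat{C}_j$ gets covered: $s$ may lie in $\Gamma(\hat{C}_j)$ (or in the boundary of some biset of ${\cal F}^k(C_j)$), in which case no edge incident to $s$ goes from $C_j$ to $C_j^*$, and moreover a biset containing $\hat{C}_j$ and no other core may survive and become a new core, so the core count need not drop on account of $\hat{C}_j$ at all. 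Second, and decisively, the sets $U_1,\dots,U_{\nu_i}$ are \emph{pairwise disjoint} (Lemma~\ref{l:F*}), so any single center $s$ lies in at most one $U_j$; hence ``summing the kill-probabilities over $j$'' gives at most $1$, not $\Omega\left(\nu_i\,\frac{|T|-k}{|T|}\right)$, and your random-center computation only yields the trivial bound of one core per star, i.e.\ $O(\nu)$ stars. Third, the inequality $|U_j|\ge|T|-k$ that you import from Lemma~\ref{l:bounds}(ii) holds there only under the hypothesis that no single edge decreases $\nu\left({\cal F}^k\right)$; it is false in general during the greedy execution.

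The paper's proof avoids all three problems by putting the center \emph{outside} $\hat{U}_C$: say $s$ out-covers $\hat{C}$ if $s\in U_C^*$. Counting degrees in the hypergraph $\left\{T\setminus\Gamma\left(\hat{U}_C\right)\right\}$, whose sets have size at least $|T|-k$ (by $k$-regularity, whether or not $\hat{U}_C\in{\cal F}$), and using that the $U_C$ are pairwise disjoint (so $s$ lies in $U_C$ for at most one $C$), some $s\in T$ out-covers at least $\nu\left(1-\frac{k}{|T|}\right)-1$ cores --- here the disjointness of the $U_C$ works \emph{for} the argument instead of against it. For each such $C$, the star edge from $s$ into $C$ covers \emph{every} biset of ${\cal F}^k(C)$, since its inner part contains $C$ and its complementary set contains $U_C^*\ni s$; consequently every core of the residual family contains either an untouched old core or two out-covered ones, giving $\nu\left({\cal F}^k_S\right)\le\nu-|{\cal C}|/2$ (Lemma~\ref{l:cover}) and hence the recursion $\nu_{i+1}\le\frac{1}{2}\left(1+\frac{k}{|T|}\right)\nu_i+\frac{1}{2}$, which Lemma~\ref{l:computations} and the endgame turn into the stated bound. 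Your proposal is missing precisely this out-covering idea and the accompanying ``each new core swallows two old ones'' accounting, so as written it does not establish the multiplicative per-step decrease.
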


Recall that given $\hat{C} \in {\cal C}\left({\cal F}^k_I\right)$ we denote by $\hat{U}_C$
the union of the bisets in ${\cal F}^k_I$ that contain $\hat{C}$ and contain no other 
member of ${\cal C}\left({\cal F}^k_I\right)$, and that 
by Lemma~\ref{l:F*}, the sets in the set-family $\{U_C: \hat{C} \in {\cal C}({\cal F})\}$ are pairwise disjoint.

\begin{definition} [\cite{KN2}]
Let us say that $s \in V$ {\em out-covers} $\hat{C} \in {\cal C}\left({\cal F}^k\right)$ if $s \in U^*_C$.
\end{definition}

\begin{lemma} \label{l:cover}
Let ${\cal F}$ be $k$-regular biset-family and let $\nu=\nu\left({\cal F}^k\right)$.
\begin{itemize}
\item[{\em (i)}]
There is $s \in T$ that out-covers at least $\nu\left(1-\frac{k}{|T|}\right)-1$
members of ${\cal C}\left({\cal F}^k\right)$.
\item[{\em (ii)}]
Let $s$ out-cover the members of ${\cal C} \subseteq {\cal C}\left({\cal F}^k\right)$ 
and let $S$ be a star with one edge from $s$ to the inner part of each member of ${\cal C}$.
Then $\nu({\cal F}^k) \leq \nu({\cal F}^k_S)-|{\cal C}|/2$.
\end{itemize}
Consequently, there exists a star $S$ on $T$ such that 
\begin{equation} \label{e:s}
\nu({\cal F}^k_S) \leq \frac{1}{2}\left(1+\frac{k}{|T|}\right) \cdot \nu +\frac{1}{2} = 
\alpha \cdot \nu +\beta \ .
\end{equation}
\end{lemma}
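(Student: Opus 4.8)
The plan is to prove parts~(i) and~(ii) separately and then combine them; I would do part~(ii) first, since it carries the main combinatorial content, and throughout I would use that $\nu({\cal F}^k)=|{\cal C}({\cal F}^k)|$, which holds for $k$-regular families by Lemma~\ref{l:F*}.

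For part~(ii), fix $s$ and let ${\cal C}\subseteq{\cal C}({\cal F}^k)$ be the cores it out-covers, so that $S$ consists of one edge $sv_{\hat C}$ with $v_{\hat C}\in C$ for each $\hat C\in{\cal C}$. The first point is that $sv_{\hat C}$ covers not merely $\hat C$ but every $\hat X\in{\cal F}^k$ with $\hat C\subseteq\hat X$ that contains no core of ${\cal F}^k$ other than $\hat C$: such $\hat X$ satisfies $\hat X\subseteq\hat U_C$ by the very definition of $\hat U_C$, and then $v_{\hat C}\in C\subseteq X$ while $s\in U_C^*\subseteq X^*$. Next, applying Lemma~\ref{l:F*} to the $k$-regular families ${\cal F}^k$ and ${\cal F}^k_S$ (a residual of a $k$-regular family is $k$-regular), the cores of ${\cal F}^k$ have pairwise disjoint inner parts, and likewise for ${\cal F}^k_S$. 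For a core $\hat D$ of ${\cal F}^k_S$ let $n(\hat D)$ be the number of cores of ${\cal F}^k$ contained in $\hat D$; then $n(\hat D)\ge 1$, the families of cores counted by distinct $\hat D$ are pairwise disjoint (a common core would force two cores of ${\cal F}^k_S$ to meet), so $\sum_{\hat D}n(\hat D)\le\nu$, and if $n(\hat D)=1$ then the unique core of ${\cal F}^k$ it contains does not lie in ${\cal C}$ — otherwise $\hat D\subseteq\hat U_C$ and $\hat D$ would be covered by $S$, contradicting $\hat D\in{\cal F}^k_S$. Letting $a$ and $b$ count the cores $\hat D$ of ${\cal F}^k_S$ with $n(\hat D)=1$ and with $n(\hat D)\ge 2$ respectively, an injection into ${\cal C}({\cal F}^k)\setminus{\cal C}$ gives $a\le\nu-|{\cal C}|$, while $a+2b\le\sum_{\hat D}n(\hat D)\le\nu$; hence $\nu({\cal F}^k_S)=a+b\le(a+\nu)/2\le\nu-|{\cal C}|/2$, which is part~(ii).

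For part~(i) I would double-count the pairs $(s,\hat C)$ with $s\in T$ out-covering $\hat C\in{\cal C}({\cal F}^k)$. For a fixed core $\hat C$ the number of such $s$ equals $|U_C^*|=|T|-|U_C|-|\Gamma(\hat U_C)|\ge|T|-|U_C|-k$, where $|\Gamma(\hat U_C)|\le k$ follows from $\hat U_C\in{\cal F}$, obtained by building the union defining $\hat U_C$ one biset at a time and invoking $k$-regularity of ${\cal F}$ (with a small separate check when the running inner part is large). Summing over the $\nu$ cores and using that the inner parts $U_C$ are pairwise disjoint subsets of $T$ — the fact recalled just before the definition of out-covering, itself a consequence of Lemma~\ref{l:F*} — the count is at least $\nu(|T|-k)-|T|$, so averaging over $s\in T$ produces some $s$ out-covering at least $\nu(1-k/|T|)-1$ cores.

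To conclude, I would take $s$ and its set ${\cal C}$ of out-covered cores from part~(i), so $|{\cal C}|\ge\nu(1-k/|T|)-1$, form the associated star $S$, and apply part~(ii): $\nu({\cal F}^k_S)\le\nu-|{\cal C}|/2\le\nu-\frac{1}{2}\left(\nu(1-k/|T|)-1\right)=\frac{1}{2}\left(1+\frac{k}{|T|}\right)\nu+\frac{1}{2}=\alpha\nu+\beta$, which is~(\ref{e:s}). The step I expect to be most delicate is the factor-$2$ bookkeeping in part~(ii): one must be careful that residual families remain $k$-regular so Lemma~\ref{l:F*} applies, that the ``covers every biset below a single core'' claim is stated for exactly the bisets comprising $\hat U_C$, and that the two inequalities $\sum_{\hat D}n(\hat D)\le\nu$ and $a\le\nu-|{\cal C}|$ combine to give the clean bound $\nu-|{\cal C}|/2$. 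A secondary subtlety is the uniform estimate $|\Gamma(\hat U_C)|\le k$ in part~(i), which needs the union $\hat U_C$ to stay inside ${\cal F}$.
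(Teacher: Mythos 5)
Your proposal is correct and follows essentially the same route as the paper: part~(ii) is the paper's argument that every core of ${\cal F}^k_S$ either contains a core of ${\cal F}^k$ outside ${\cal C}$ or at least two cores (because the edge into $C$ covers all of ${\cal F}^k(C)\subseteq\hat{U}_C$), with the resulting counting spelled out, and part~(i) is the same incidence/averaging count that the paper phrases as a hypergraph-degree argument on $\{T\setminus\Gamma(\hat{U}_C)\}$; you also correctly read the lemma's displayed inequality in the intended direction $\nu({\cal F}^k_S)\le\nu-|{\cal C}|/2$. The only imprecise point is your unconditional claim that $\hat{U}_C\in{\cal F}$ and hence $|\Gamma(\hat{U}_C)|\le k$: $k$-regularity yields this only while $|U_C|\le|T|-k-1$; however, when $|U_C|\ge|T|-k$ the estimate $|U_C^*|\ge|T|-|U_C|-k$ you actually use is vacuously true, so your count goes through unchanged --- this two-case reading is exactly how the paper's bound $|T\setminus\Gamma(\hat{U}_C)|\ge|T|-k$ is justified.
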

\begin{proof}
We prove (i). Consider the hypergraph 
${\cal H}=\left\{T \setminus \Gamma\left(\hat{U}_C\right): \hat{C} \in {\cal C}\left({\cal F}^k\right)\right\}$.
Note that the number of members of ${\cal C}\left({\cal F}^k\right)$ out-covered by any 
$v \in T$ is at least the degree of $s$ in ${\cal H}$ minus $1$.
Thus all we need to prove is that there is a node $s \in T$ whose degree 
in ${\cal H}$ is at least $\nu\left(1-\frac{k}{|T|}\right)$.
For every $C \in {\cal C}({\cal F})$ we have 
$\left|T \setminus \Gamma\left(\hat{U}_C\right)\right| \geq |T|-k$, by the $k$-regularity of ${\cal F}$.
Hence the bipartite incidence graph of ${\cal H}$ 
has at least $\nu(|T|-k)$ edges, and thus has a node $s \in T$ of degree at least
$\nu\left(1-\frac{k}{|T|}\right)$, which equals the degree of $s$ in ${\cal H}$.
Part (i) follows.

We prove (ii).
It is sufficient to show that every $\hat{C} \in {\cal C}\left({\cal F}^k_S\right)$ contains some
$\hat{C}' \in {\cal C}\left({\cal F}^k\right) \setminus {\cal C}$ or contains at least two members in ${\cal C}$.
Clearly, $\hat{C}$ contains some $\hat{C}' \in {\cal C}\left({\cal F}^k\right)$.
We claim that if $\hat{C}' \in {\cal C}$ then $\hat{C}$ must contain 
some $\hat{C}'' \in {\cal C}\left({\cal F}^k\right)$ distinct from $\hat{C}'$. 
Otherwise, $\hat{C} \in {\cal F}^k(C)$. But as $S$ covers all members
of ${\cal F}^k(C)$, $\hat{C} \notin {\cal F}^k_S$. This is a contradiction.
\end{proof}

Let us use parameters $\alpha,\beta,\gamma,\delta$ and $j$ set to
$$\alpha = \frac{1}{2}\left(1+\frac{k}{|T|}\right)  \ \ \ \ \   
  \beta  = \frac{1}{2} \ \ \ \ \  
  \gamma = 1-\frac{k}{|T|}=2(1-\alpha) \ \ \ \ \   
  \delta = 1,
$$ 
and $j$ is the minimum integer such that 
$\alpha^j\left(\nu-\frac{\beta}{1-\alpha}\right) \leq \frac{2}{1-\alpha}$ (note that $\alpha<1$), namely,   
\begin{equation} \label{e:j}
j=\left\lfloor \frac{\ln \frac{1}{2}(\nu(1-\alpha)-\beta)}{\ln(1/\alpha)}\right\rfloor \leq 
\left\lfloor \frac{\ln \frac{1}{2}\nu(1-\alpha)}{\ln(1/\alpha)}\right\rfloor \ .
\end{equation}
We assume that $\nu \geq  \frac{2+\beta}{1-\alpha}$ to have $j \geq 0$ 
(otherwise Lemma~\ref{l:greedy} follows). Note that $\frac{\beta}{1-\alpha}=\frac{|T|}{|T|-k}$.

\begin{lemma} \label{l:computations}
Let $0 \leq \alpha < 1$, $\beta \geq 0$, $\nu_0=\nu$, and  for $i \geq 1$ let 
$$\nu_{i+1} \leq \alpha \nu_i + \beta \ \ \ \ \ s_i = \gamma \nu_{i-1}-\delta \ .$$
Then 
$\nu_i \leq \alpha^i\left(\nu-\frac{\beta}{1-\alpha}\right) + \frac{\beta}{1-\alpha}$
and
$\sum\limits_{i=1}^j s_i \leq 
\frac{1-\alpha^j}{1-\alpha} \cdot \gamma\left(\nu-\frac{\beta}{1-\alpha}\right) + 
j\left(\frac{\gamma \beta}{1-\alpha}-\delta\right)
$.
Moreover, if $j$ is given by (\ref{e:j}) then 
$\nu_j \leq \frac{2+\beta}{1-\alpha}=\frac{5|T|}{|T|-k}$ and 
$\sum_{i=1}^j s_i \leq  2\left(\nu-\frac{|T|}{|T|-k}\right)$.
\end{lemma}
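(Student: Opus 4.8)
The plan is to prove the two displayed inequalities by elementary manipulation of the recurrence, and then to obtain the two ``moreover'' bounds by substituting the specific values of $\alpha,\beta,\gamma,\delta$ and the choice of $j$ in (\ref{e:j}).

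First I would establish $\nu_i\leq\alpha^i\left(\nu-\frac{\beta}{1-\alpha}\right)+\frac{\beta}{1-\alpha}$ by induction on $i$. The base case $i=0$ is the identity $\nu_0=\nu$. For the inductive step, from $\nu_{i+1}\leq\alpha\nu_i+\beta$ and the induction hypothesis we obtain $\nu_{i+1}\leq\alpha^{i+1}\left(\nu-\frac{\beta}{1-\alpha}\right)+\frac{\alpha\beta}{1-\alpha}+\beta$, and the claimed bound with $i+1$ in place of $i$ follows since $\frac{\alpha\beta}{1-\alpha}+\beta=\frac{\beta}{1-\alpha}$.

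Next I would bound the partial sum. Writing $\sum_{i=1}^j s_i=\gamma\sum_{i=1}^j\nu_{i-1}-j\delta=\gamma\sum_{m=0}^{j-1}\nu_m-j\delta$ and substituting the bound just proved for each $\nu_m$, the only nontrivial piece is the finite geometric sum $\sum_{m=0}^{j-1}\alpha^m=\frac{1-\alpha^j}{1-\alpha}$ (here $\alpha<1$). Collecting terms gives exactly $\sum_{i=1}^j s_i\leq\frac{1-\alpha^j}{1-\alpha}\cdot\gamma\left(\nu-\frac{\beta}{1-\alpha}\right)+j\left(\frac{\gamma\beta}{1-\alpha}-\delta\right)$, so the only point requiring attention is the index shift in the summation.

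Finally, for the ``moreover'' part I would plug in $\gamma=2(1-\alpha)$, $\beta=\frac{1}{2}$, $\delta=1$. These give $\frac{\gamma\beta}{1-\alpha}-\delta=2\beta-\delta=0$, so the $j$-dependent term vanishes, and $\frac{1-\alpha^j}{1-\alpha}\cdot\gamma=2(1-\alpha^j)$; since the standing assumption $\nu\geq\frac{2+\beta}{1-\alpha}$ ensures $\nu-\frac{\beta}{1-\alpha}\geq 0$, we conclude $\sum_{i=1}^j s_i\leq 2\left(\nu-\frac{\beta}{1-\alpha}\right)=2\left(\nu-\frac{|T|}{|T|-k}\right)$, using $\frac{\beta}{1-\alpha}=\frac{|T|}{|T|-k}$. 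For $\nu_j$, the choice of $j$ in (\ref{e:j}) is made precisely so that $\alpha^j\left(\nu-\frac{\beta}{1-\alpha}\right)\leq\frac{2}{1-\alpha}$; combined with the closed form this yields $\nu_j\leq\frac{2}{1-\alpha}+\frac{\beta}{1-\alpha}=\frac{2+\beta}{1-\alpha}$, which equals $\frac{5|T|}{|T|-k}$ after substituting $\beta=\frac{1}{2}$ and $1-\alpha=\frac{|T|-k}{2|T|}$. I expect no real obstacle here: the computation is bookkeeping, and the only thing to be careful about is that the cancellation of the $j$-term uses exactly the chosen relations $\gamma=2(1-\alpha)$ and $2\beta=\delta$, and that discarding the factor $1-\alpha^j$ is legitimate only because $\nu\geq\frac{\beta}{1-\alpha}$, which is where the hypothesis $\nu\geq\frac{2+\beta}{1-\alpha}$ (equivalently $j\geq 0$) is invoked.
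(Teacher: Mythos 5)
Your proposal is correct and follows essentially the same route as the paper: the closed form $\nu_i \leq \alpha^i\left(\nu-\frac{\beta}{1-\alpha}\right)+\frac{\beta}{1-\alpha}$ (the paper unrolls the recurrence where you use induction, which is the same computation), then the geometric-sum bound on $\sum_{i=1}^j s_i$, and finally substitution of $\alpha,\beta,\gamma,\delta$ and the choice of $j$ in (\ref{e:j}). Your explicit remarks that the $j$-term cancels because $\frac{\gamma\beta}{1-\alpha}=\delta$ and that dropping $1-\alpha^j$ needs $\nu\geq\frac{\beta}{1-\alpha}$ are exactly the (implicit) bookkeeping in the paper's proof.
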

\begin{proof}
Unraveling the recursive inequality $\nu_{i+1} \leq \alpha \nu_i + \beta$ in the lemma we get:
$$\nu_i \leq \alpha^i \nu +\beta\left(1+\alpha+ \cdots +\alpha^{i-1}\right) = 
\alpha^i \nu +\beta \frac{1-\alpha^i}{1-\alpha} = 
\alpha^i\left(\nu-\frac{\beta}{1-\alpha}\right) + \frac{\beta}{1-\alpha} \ .$$
This implies 
$s_i \leq \gamma\left(\nu-\frac{\beta}{1-\alpha}\right) \alpha^{i-1} + \frac{\gamma \beta}{1-\alpha}-\delta$,
and thus 
\begin{eqnarray*}
\sum_{i=1}^j s_i & \leq & \gamma\left(\nu-\frac{\beta}{1-\alpha}\right) \sum_{i=1}^j \alpha^{i-1} + 
                          j\left(\frac{\gamma \beta}{1-\alpha}-\delta\right) \\
                 & =    & \gamma\left(\nu-\frac{\beta}{1-\alpha}\right) \cdot \frac{1-\alpha^j}{1-\alpha} + 
                          j\left(\frac{\gamma \beta}{1-\alpha}-\delta\right)
\end{eqnarray*}
If $j$ is given by (\ref{e:j}) then 
$\nu_j \leq \alpha^i\left(\nu-\frac{\beta}{1-\alpha}\right) + \frac{\beta}{1-\alpha} \leq 
\frac{2}{1-\alpha} + \frac{\beta}{1-\alpha} = \frac{2+\beta}{1-\alpha}$, and
\begin{eqnarray*}
\sum\limits_{i=1}^j s_i & \leq & 
\frac{1-\alpha^j}{1-\alpha} \cdot \gamma\left(\nu-\frac{\beta}{1-\alpha}\right) + 
j\left(\frac{\gamma \beta}{1-\alpha}-\delta\right) \\
& \leq &  
2\left(\nu-\frac{\beta}{1-\alpha}\right) = 2\left(\nu-\frac{|T|}{|T|-k}\right) \ .
\end{eqnarray*}
\end{proof}

We now finish the proof of Lemma~\ref{l:greedy}.
At each one of the first $j$ iterations we out-cover at least 
$\nu\left({\cal F}^k_I\right)\left(1-\frac{k}{|T|}\right)-1$
members of ${\cal C}\left({\cal F}^k_I\right)$, by Lemmas \ref{l:cover}.
In each one of the consequent iterations, we can reduce $\nu\left({\cal F}^k_I\right)$
by at least one, if we choose the center of the star in $C$ for some $\hat{C} \in {\cal C}\left({\cal F}^k_I\right)$.
Thus using Lemma~\ref{l:computations}, performing the necessary computations, and substituting 
the values of the parameters, we obtain that the number of stars in ${\cal S}$ is bounded by
$$j+\nu_j \leq \left\lfloor \frac{\ln \frac{1}{2}\nu(1-\alpha)}{\ln(1/\alpha)}\right\rfloor + \frac{5|T|}{|T|-k} =
O\left(\frac{|T|}{|T|-k}  \ln \min\{\nu,|T|-k\}\right) \ .$$

Now we discuss a variation of this algorithm that produces ${\cal S}$ with a small number of leaves.
Here at each one of the first $j$ iterations we out-cover {\em exactly} $\nu\left(1-\frac{k}{|T|}\right)-1$
min-cores. For that, we need be able to compute the bisets $\hat{U}_C$, and such a procedure
can be found in \cite{L}. 
The number of edges in the stars at the end of this phase is 
at most $2\left(\nu-\frac{|T|}{|T|-k}\right)$ and $\nu_j \leq \frac{5|T|}{|T|-k}$.
In the case of non-symmetric ${\cal F}$ and/or directed edges, we apply the same algorithm on 
$\bar{\cal F}^k$.
At this point, we apply Phase~2 of the algorithm from the previous section.
Since the number of cores of each one of ${\cal F}^k_I,\bar{\cal F}^k_I$ is now $O\left(\frac{|T|}{|T|-k}\right)$,
the size of the transversal $T'$ computed is bounded by 
$|T'|=O\left(\frac{|T|}{|T|-k} \cdot \log \frac{|T|}{|T|-k}\right)$.
The number of stars is at most the size $|T'|$, while the
number of edges in the stars is at most 
$|T'| \cdot \nu\left(\bar{\cal F}_I\right)=
{\left(\frac{|T|}{|T|-k}\right)}^2 \cdot O\left(\log \frac{|T|}{|T|-k}\right)$.

This concludes the proof of Theorem~\ref{t:stars}.



\end{document}